\renewcommand{\Pr}{\mathbb{P}} 
\newcommand{\pmf}{\phi}
\newcommand{\cond}{\;\middle\vert\:}
\newcommand{\bvec}[1]{\bm{#1}}
\newcommand{\vdd}{V_\mathrm{dd}}
\newcommand{\pres}{p_\mathrm{res}}
\newcommand{\gammaseq}{\bvec{\overline{\gamma}}}
\newcommand{\volt}{\,\mathrm{V}}
\newcommand{\ns}{\,\mathrm{ns}}
\newcommand{\pJ}{\,\mathrm{pJ}}
\newcommand{\nJns}{\,\mathrm{nJ}\cdot\mathrm{ns}}
\newtheorem{lemma}{Lemma} 
\newtheorem{definition}{Definition}
\newcommand\textvtt[1]{{\normalfont\fontfamily{cmvtt}\selectfont #1}}
\newacronym{iid}{i.i.d.}{independent and identically distributed} 
\newacronym{MC}{MC}{Monte-Carlo}
\newacronym{exit}{ExIT}{``extrinsic information transfer''}
\newacronym{1D}{\mbox{1-D}}{one-dimensional} 
\newacronym{pmf}{PMF}{probability mass function}
\title{Modeling and Energy Optimization of\\ LDPC Decoder Circuits with Timing Violations}
\author{Fran\c{c}ois Leduc-Primeau, Frank~R.~Kschischang, and Warren~J.~Gross
	\thanks{F.~Leduc-Primeau is with the Department of Electronics, IMT Atlantique, Brest, France, and with CNRS Lab-STICC (e-mail: francois.leduc-primeau@imt-atlantique.fr). W.~J.~Gross is with the Department of Electrical and Computer Engineering, McGill University, Montreal, Canada (e-mail: warren.gross@mcgill.ca).
F.~R.~Kschischang is with the Department of Electrical and Computer Engineering, University of Toronto, Toronto, Canada (e-mail: frank@comm.utoronto.ca).
	}  
	\thanks{A preliminary version of this work was presented at the IEEE ICC 2015 conference~\cite{leduc-primeau:2015b}.}
}
\begin{document}
\maketitle
\thispagestyle{empty}

\begin{abstract}
This paper proposes a ``quasi-synchronous'' design approach for signal processing circuits, in which timing violations are permitted, but without the need for a hardware compensation mechanism.
The case of a low-density parity-check (LDPC) decoder is studied, and a method for accurately modeling the effect of timing violations at a high level of abstraction is presented.
The error-correction performance of code ensembles is then evaluated using density evolution while taking into account the effect of timing faults.
Following this, several quasi-synchronous LDPC decoder circuits based on the offset min-sum algorithm are optimized, providing a 23\%--40\% reduction in energy consumption or energy-delay product, while achieving the same performance and occupying the same area as conventional synchronous circuits.
\end{abstract}

\section{Introduction}\label{sec:intro}
The time required for a signal to propagate through a CMOS circuit varies depending on several factors. Some of the variation results from physical limitations: the delay depends on the initial and final charge state of the circuit. Other variations are due to the difficulty (or impossibility) of controlling the fabrication process and the operating conditions of the circuit \cite{ghosh:2010}.
As process technologies approach atomic scales, the magnitude of these variations is increasing, and reducing the supply voltage to save energy increases the variations even further \cite{dreslinski:2010}.

The variation in propagation delay is a source of energy inefficiency for synchronous circuits since the clock period is determined by the worst delay. One approach to alleviate this problem is to allow timing violations to occur.
While this would normally be catastrophic, some applications (in signal processing or in error-correcting decoding, for example) can tolerate a degradation in the operation of the circuit, either because an approximation to the ideal output suffices, or because the algorithm intrinsically rejects noise.  This paper proposes an approach to the design of
systems 
that are tolerant to timing violations.  In particular we apply this approach to the design of
energy-optimized low-density parity-check (LDPC) decoder circuits based on a
state-of-the-art soft-input algorithm and architecture.

Other approaches have been previously proposed to build synchronous systems that can tolerate some timing violations. 
In \emph{better than worst-case} (BTWC) \cite{austin:2005} or \emph{voltage over-scaled} (VOS) circuits, a mechanism is added to the circuit to compensate or recover from timing faults.
One such method introduces special latches that can detect timing violations, and can trigger a restart of the computation when needed \cite{bowman:2009,das:2009}. Since the circuit's latency is increased significantly when a timing violation occurs, this approach is only suitable for tolerating small fault rates (e.g., $10^{-7}$) and for applications where the circuit can be easily restarted, such as microprocessors that support speculative execution.

In most signal processing tasks, it is acceptable for the output to be non-deterministic, which creates more possibilities for dealing with timing violations.
A seminal contribution in this area was the algorithmic noise tolerance (ANT) approach \cite{hegde:1999,shim:2004}, which is to allow timing violations to occur in the main processing block, while adding a separate reliable processing block with reduced precision that is used to bound the error of the main block, and provide algorithmic performance guarantees.
The downside of the ANT approach is that it relies on the assumption that timing violations will first occur in the most significant bits. If that is not the case, the precision of the circuit can degrade to the precision of the auxiliary block, limiting the scheme's usefulness. For many circuits, including some adder circuits \cite{liu:2010}, this assumption does not hold. Furthermore, the addition of the reduced precision block and of a comparison circuit increases the area requirement.

We propose a design methodology for digital circuits with a relaxed synchronicity requirement that does not rely on any hardware compensation mechanism. Instead, we provide performance guarantees by re-analyzing the algorithm while taking into account the effect of timing violations. We say that such systems are \emph{quasi-synchronous}.
%
LDPC decoding algorithms are good candidates for a quasi-synchronous implementation because their throughput and energy consumption are limiting factors in many applications, and like other signal processing algorithms, 
their performance is assessed in terms of expected values.
Furthermore, since the algorithm is iterative, there is a possibility to optimize each iteration separately, and we show that this allows for additional energy savings.

The topic of unreliable LDPC decoders has been discussed in a number of contributions. Varshney studied the Gallager-A and the Sum-Product decoding algorithms when the computations and the message exchanges are ``noisy'', and showed that the density evolution analysis still applies \cite{varshney:2011}.
The Gallager-B algorithm was also analyzed under various scenarios \cite{leduc-primeau:2012,tabatabaei-yazdi:2013,huang:2014}.
A model for an unreliable quantized Min-Sum decoder was proposed in \cite{ngassa:2013}, which provided numerical evaluation of the density evolution equations as well as simulations of a finite-length decoder.
Faulty finite-alphabet decoders were studied in \cite{dupraz:2015}, where it was proposed to model the decoder messages using conditional distributions that depend on the ideal messages.
The quantized Min-Sum decoder was also analyzed in \cite{balatsoukas-stimming:2014} for the case where faults are the result of storing decoder messages in an unreliable memory.
The specific case of faults caused by delay variations in synchronous circuits is considered in \cite{brkic:2015}, where a deviation model is proposed for binary-output circuits in which a deviation occurs probabilistically when the output of a circuit changes from one clock cycle to the next, but cannot occur if the output does not change.
While none of these contributions explicitly consider the relationship between the reliability of the decoder's implementation and the energy it consumes, there have been some recent developments in the analysis of the energy consumption of reliable decoders. Lower bounds for the scaling of the energy consumption of error-correction decoders in terms of the code length are derived in \cite{blake:2015}, and tighter lower bounds that apply to LDPC decoders are derived in \cite{blake:2015b}.
The power required by regular LDPC decoders is also examined in \cite{ganesan:2016}, as part of the study of the total power required for transmitting and decoding the codewords.

In this paper, we present a modeling approach that provides an accurate representation of the deviations introduced in the output of an LDPC decoder processing circuit in the presence of occasional timing violations, while simultaneously measuring its energy consumption. 
We introduce a \emph{weak symmetry} property for this model, and show that when it is satisfied,
the model can be used as part of a density evolution analysis to evaluate the channel threshold and iterative performance of the decoder affected by timing faults.
We also present experimental evidence showing that weak symmetry is satisfied for the decoder circuits that we consider.
Finally, we show that under mild assumptions, the problem of minimizing the energy consumption of a quasi-synchronous decoder can be simplified to the energy minimization of a small test circuit, and present an approximate optimization method similar to Gear-Shift Decoding \cite{ardakani:2006} that finds sequences of quasi-synchronous decoders that minimize decoding energy subject to performance constraints.
We note that subsequent to \cite{leduc-primeau:2015b}, an energy optimization method for faulty LDPC decoders was presented in \cite{schlafer:2016}. Both methods look for a sequence of operating conditions that minimize decoding energy, but the method in \cite{schlafer:2016} requires that the operating conditions be ordered based on the message error probability that can be achieved, which is not possible in general without knowing the message distribution.

The remainder of the paper is organized as follows. 
Section~\ref{sec:LDPC} reviews LDPC codes and describes the circuit architecture of the decoder that is used to measure timing faults.
Section~\ref{sec:deviation} presents the \emph{deviation} model that represents the effect of timing faults on the algorithm.
Section~\ref{sec:analysis} then discusses the use of density evolution and of the deviation model to predict the performance of a decoder affected by timing faults.
Finally, Section~\ref{sec:optimization} presents the energy optimization strategy and results, and Section~\ref{sec:conclusion} concludes the paper.
Additional details on the CAD framework used for circuit measurements can be found in Appendix~\ref{sec:appendix:workflow}, and Appendix~\ref{sec:appendix:testcircuit} provides some details concerning the simulation of the test circuits.

\section{LDPC Decoding Algorithm and Architecture}\label{sec:LDPC}

	\subsection{Code and Channel}\label{sec:code-channel}

	We consider a communication scenario where a sequence of information bits is encoded using a binary LDPC code of length $n$. 
	The LDPC code described by an $m \times n$ binary parity-check matrix $H = [ h_{j,i} ]$ consists of all length-$n$ row vectors $v$ satisfying the equation $vH^\textsc{T} = 0$. Equivalently, the code can be described by a bipartite Tanner graph with $n$ \emph{variable nodes} (VN) and $m$ \emph{check nodes} (CN) having an edge between the $i$-th variable node and the $j$-th check node if and only if $h_{j,i} \neq 0$.
	We assume that the LDPC code is regular, which means that in the code's Tanner graph each variable node has a fixed degree $d_v$ and each check node has a fixed degree $d_c$.

	Let us assume that the transmission takes place over the binary-input additive white Gaussian noise (BIAWGN) channel. A codeword $\bvec{x} \in \{-1,1\}^n$ is transmitted through the channel, which outputs the received vector $\bvec{y} = \bvec{x} + \bvec{w}$, where $\bvec{w}$ is a vector of $n$ \gls{iid} zero-mean normal random variables with variance $\sigma^2$.
	We use $x_i$ and $y_i$ to refer to the input and output of the channel at time $i$.
	The BIAWGN channel has the property of being output symmetric, meaning that
	$\phi_{y_i \vert x_i}\left( q \cond 1 \right) = \phi_{y_i \vert x_i}\left( -q \cond -1 \right)$,
	and memoryless, meaning that
	$\phi_{\bm{y} \vert \bm{x}}\left(\bm{q} \cond \bm{r}\right) = \prod_{i=1}^n \phi_{y_i \vert x_i}\left(q_i \cond r_i\right)$.
	Throughout the paper, $\pmf(\cdot)$ denotes a probability density function.
	The BIAWGN channel can also be described multiplicatively as $\bm{y}=\bm{xz}$, where $\bm{z}$ is a vector of \gls{iid} normal random variables with mean $1$ and variance $\sigma^2$.
	
	Let the \emph{belief} output $\mu_i$ of the channel at time $i$ be given by 
	\begin{equation}\label{eq:channelbelief}
	\mu_i = \frac{\alpha y_i}{ \sigma^2} \, , 
	\end{equation}
	with $\alpha>0$.
	Note that if $\alpha=2$ then $\mu_i$ is a log-likelihood ratio.
	Assuming that $x_i=1$ was transmitted, then $\mu_i$ has a normal distribution with mean $\alpha/\sigma^2$ and variance $\alpha^2/\sigma^2$.
	Writing $\rho = \alpha/\sigma^2$, we see that $\mu_i$ is Gaussian with mean $\rho$ and variance $\alpha \rho$, that is, the distribution of $\mu_i$ is described by a single parameter $\rho$.
	We call this distribution a \gls{1D} normal distribution.
	The distribution of $\mu_i$ can also be specified using other equivalent parameters, such as the probability of error $p_e$, given by 
\ifCLASSOPTIONdraftcls 
	\begin{equation}\label{eq:errprob}
	p_e = \Pr\left(\mu_i < 0 \middle\vert x_i=1\right) 
	   = \Pr\left(\mu_i>0 \middle\vert x_i=-1\right)
	   = \frac{1}{2} \,\mathrm{erfc}\!\left(\frac{1}{\sqrt{2\sigma^2}} \right)= \frac{1}{2} \, \mathrm{erfc}\!\left(\sqrt{\frac{\rho}{2\alpha}}\right),
	\end{equation}
\else 
	\begin{multline}\label{eq:errprob}
	p_e = \Pr\left(\mu_i < 0 \middle\vert x_i=1\right) 
	   = \Pr\left(\mu_i>0 \middle\vert x_i=-1\right)\\
	   = \frac{1}{2} \,\mathrm{erfc}\!\left(\frac{1}{\sqrt{2\sigma^2}} \right)= \frac{1}{2} \, \mathrm{erfc}\!\left(\sqrt{\frac{\rho}{2\alpha}}\right),
	\end{multline}
\fi
	where $\mathrm{erfc}(\cdot)$ is the complementary error function.

	\subsection{Decoding Algorithm}\label{sec:algorithm}
	The well-known Offset Min-Sum (OMS) algorithm is a simplified version of the Sum-Product algorithm that can usually achieve similar error-correction performance. It has been widely used in implementations of LDPC decoders \cite{cushon:2010,roth:2010,sun:2011}.
	To make our decoder implementation more realistic and show the flexibility of our design framework, we present an algorithm and architecture that support a row-layered message-passing schedule. 
	Architectures optimized for this schedule have proven effective for achieving efficient implementations of LDPC decoders \cite{roth:2010,sun:2011,cevrero:2010}.
	Using a row-layered schedule also allows the decoder to be pipelined to increase the circuit's utilization.
	In a row-layered LDPC decoder, the rows of the parity-check matrix are partitioned into $L$ sets called \emph{layers}. 
	To simplify the description of the decoding algorithm, we assume that all the columns in a given layer contain exactly one non-zero element. This implies that $L=d_v$.
	Note that codes with \emph{at most} one non-zero element per column and per layer can also be supported by the same architecture, simply requiring a modification of the way algorithm variables are indexed.

	Let us define a set $\mathcal{L}_\ell$ containing the indices of the rows of $H$ that are part of layer $\ell$, $\ell \in [1,L]$. 
	We denote by $\mu_{i,j}^{(t)}$ a message sent from VN $i$ to CN $j$ during iteration $t$. and by $\lambda_{i,j}^{(t)}$ a message sent from CN $j$ to VN $i$. 
	It is also useful to refer to the CN neighbor of a VN $i$ that is part of layer $\ell$. Because of the restriction mentioned above, there is exactly one such CN, and we denote its index by $J(i,\ell)$.
	Finally, we denote the channel information corresponding to the $i$-th codeword bit by $\mu^{(0)}_i$, since it also corresponds to the first message sent by a variable node $i$ to all its neighboring check nodes.
	
	The Offset Min-Sum algorithm used with a row-layered message-passing schedule is described in Algorithm~\ref{alg:loms}.
	In the algorithm, $\mathcal{N}(j)$ denotes the set of indices corresponding to VNs that are neighbors of a check node $j$,
	and $\Lambda_i$ represents the current sum of incoming messages at a VN $i$.
	The function $\min_{1,2}(S)$ returns the smallest and second smallest values in the set $S$, 
	$C \geq 0$ is the offset parameter, and
	\begin{equation*}
	\mathrm{sgn}(x)=
	\begin{cases}
	1 & \text{if $x \geq 0$,} \\
	-1 & \text{if $x < 0$.}
	\end{cases}
	\end{equation*}

	\begin{algorithm}[tb] 
	   \SetKwInOut{Input}{input} \SetKwInOut{Output}{output}
	   \DontPrintSemicolon
	   \LinesNumbered
	   
	   \Input{$\{\mu^{(0)}_1, \mu^{(0)}_2, \ldots, \mu^{(0)}_n\}$}
	   \Output{$\{\hat{x}_1, \hat{x}_2, \ldots, \hat{x}_n\}$}
	   \Begin{
	   
	   	\tcp{Initialization}
		$\Lambda_i \gets \mu^{(0)}_i, \; \forall i \in [1,n]$\;
		$\lambda_{i,j}^{(0)} \gets 0, \; \forall j \in [1,m], i \in \mathcal{N}(j)$\;
	   
	   	\tcp{Decoding}
	   	\For{$t \gets 1$ \KwTo $T$ } {
			\For{$\ell \gets 1$ \KwTo $L$} { 
				\tcp{VN to CN messages}
				$\mu_{i,J(i,\ell)}^{(t)} \gets \Lambda_i - \lambda_{i,J(i,\ell)}^{(t-1)}, \; \forall i$\;			
				\tcp{CN to VN messages}
				\For{$j \in \mathcal{L}_\ell$} {
					$[m_1, m_2] \gets \min_{1,2}(\{ |\mu_{i,j}^{(t)}| : i \in \mathcal{N}(j)\})$\;
					$m_1 \gets \max(0, m_1 - C)$\;
					$m_2 \gets \max(0, m_2 - C)$\;
					$s_T \gets \prod_{i \in \mathcal{N}(j)} \mathrm{sgn}(\mu_{i,j}^{(t)})$\;
					\For{$i \in \mathcal{N}(j)$} {
						$s_i \gets s_T \cdot \mathrm{sgn}(\mu_{i,j}^{(t)})$\;
						\lIf{ $|\mu_{i,j}^{(t)}| = m_1$ } {
							$\lambda_{i,j}^{(t)}  \gets s_i \cdot m_2$
						}
						\lElse {
							$\lambda_{i,j}^{(t)}  \gets s_i \cdot m_1$
						}
					}
				}
				\tcp{VN update}
				$\Lambda_i \gets \mu_{i,J(i,\ell)}^{(t)} + \lambda_{i,J(i,\ell)}^{(t)}, \; \forall i$\;
				\tcp{VN decision} 
				\For{$i \in \{1,2,\dots,n\}$} {
					\lIf{$\Lambda_i > 0$} {
						$\hat{x}_i \gets 1$
					} 
					\lElseIf{$\Lambda_i < 0$} {
						$\hat{x}_i \gets -1$
					} 
					\lElse {
						$\hat{x}_i \gets 1 \,\mathrm{or}\, -1 \text{ with equal probability}$
					}
				}
			}	
		}
	   }
	\caption{OMS with a row-layered schedule.}
	\label{alg:loms}
	\end{algorithm}
	
	\subsection{Architecture}\label{sec:architecture}
	
	The Tanner graph of the code can also be used to represent the computations that must be performed by the decoder. At each decoding iteration, one message is sent from variable to check nodes on every edge of the graph, and again from check to variable nodes.
	We call a variable node processor (VNP) a circuit block that is responsible for generating messages sent by a variable node, and similarly a check node processor (CNP) a circuit block generating messages sent by a check node.
	
	In a row-layered architecture in which the column weight of layer subsets is at most 1, there is at most one message to be sent and received for each variable node in a given layer. Therefore VNPs are responsible for sending and receiving one message per clock cycle. CNPs on the other hand receive and send $d_c$ messages per clock cycle.	
	At any given time, every VNP and CNP is mapped respectively to a VN and a CN in the Tanner graph.
	The routing of messages from VNPs to CNPs and back can be posed as two equivalent problems. One can fix the mapping of VNs to VNPs and of CNs to CNPs, and find a permutation of the message sequence that matches VNP outputs to CNP inputs, and another permutation that matches CNP outputs to VNP inputs.
	Alternatively, if VNPs process only one message at a time, one can fix the connections between VNPs and CNPs, and choose the assignment of VN to VNPs to achieve correct message routing. 
	We choose the later approach because it allows studying the computation circuit without being concerned by the routing of messages.
	
	The number of CNPs instantiated in the decoder can be adjusted based on throughput requirements from $1$ to $m/L$ (the number of rows in a layer). As the number of CNPs is varied, the number of VNPs will vary from $d_c$ to $n$.
	An architecture diagram showing one VNP and one CNP is shown in Fig.~\ref{fig:layered_arch}. In reality, a CNP is connected to $d_c-1$ additional VNPs, which are not shown. The memories storing the belief totals $\Lambda_i$ and the intrinsic beliefs $\lambda_{i,j}^{(t)}$ are also not shown.
	The part of the VNP responsible for sending a message to the CNP is called VNP \emph{front} and the part responsible for processing a message received from a CNP is called the VNP \emph{back}. The VNP front and back do not have to be simultaneously mapped to the same VN. This allows to easily vary the number of pipeline stages in the VNPs and CNPs. Fig.~\ref{fig:layered_arch} shows the circuit with two pipeline stages. 
	
	Messages exchanged in the decoder are fixed-point numbers. The position of the binary point does not have an impact on the algorithm, and therefore the messages sent by VNs in the first iteration can be defined as rounding the result of \eqref{eq:channelbelief} to the nearest integer, while choosing a suitable $\alpha$. 
	The number of bits in the quantization, the scaling factor $\alpha$, and the OMS offset parameter are chosen based on a density evolution analysis of the algorithm (described in Section~\ref{sec:analysis}).
	We quantize decoder messages to 6~bits, which yields a decoder with approximately the same channel threshold as a floating-point decoder under a standard fault-free implementation. 
	
	\begin{figure}[tbp]
	\begin{center}
	\includegraphics[width=3.0in]{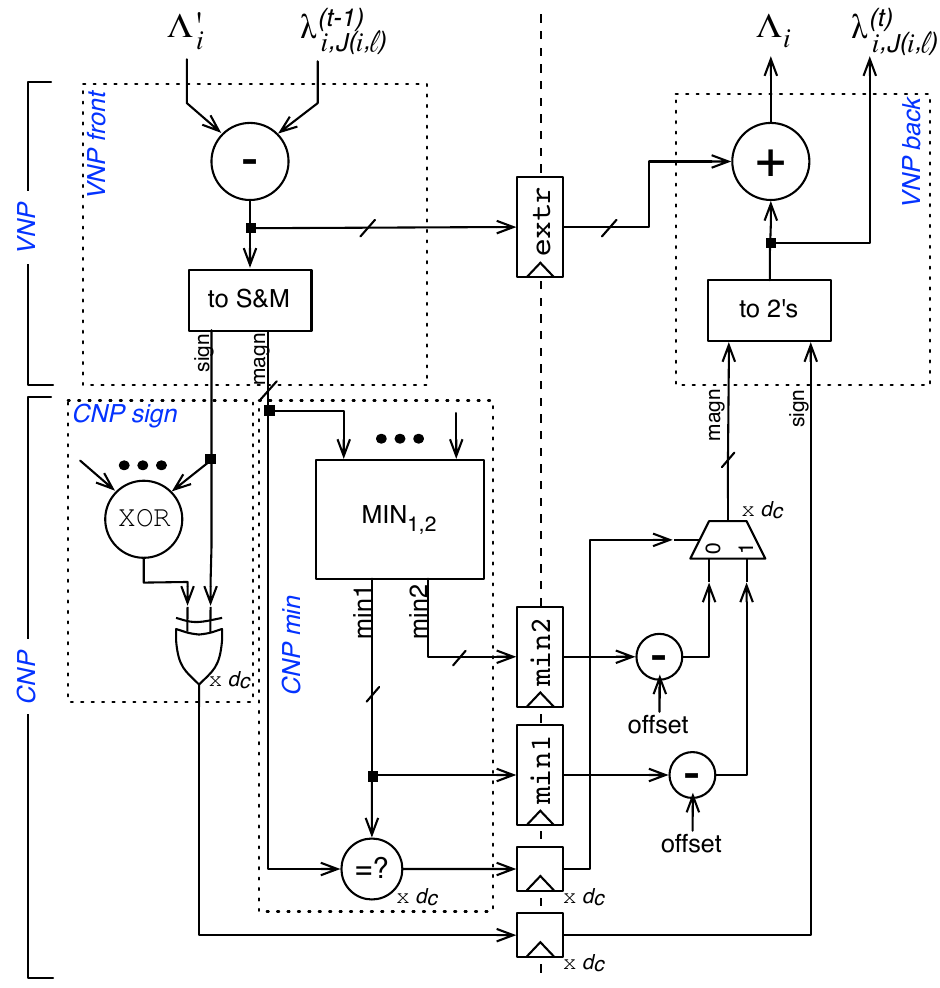}
	\caption{Block diagram of the layered Offset Min-Sum decoder architecture.}
	\label{fig:layered_arch}
	\end{center}
	\end{figure}
		
	In order to analyze a circuit that is representative of state-of-the-art architectures, we use an optimized architecture for finding the first two minima in each CNP. Our architecture is inspired by the ``tree structure'' approach presented in \cite{wey:2008}, but requires fewer comparators. Each pair of CNP inputs is first sorted using the \emph{Sort} block shown in Fig.~\ref{fig:sort-2}. These sorted pairs are then merged recursively using a tree of \emph{Merge} blocks, shown in Fig.~\ref{fig:merge}. If the number of CNP inputs is odd, the input that cannot be paired is fed directly into a special merge block with 3 inputs, which can be obtained from the 4-input \emph{Merge} block by removing the $\mathrm{min}_\mathrm{2b}$ input and the bottom multiplexer.
	
	Note that it is possible that changes to the architecture could increase or decrease the robustness of the decoder (see e.g.~\cite{sedighi:2014}), but this is outside the scope of this paper.
	
	\begin{figure}[tbp]
	\begin{center}
	\subfloat[][Sort block.]{\label{fig:sort-2}\includegraphics[scale=0.45]{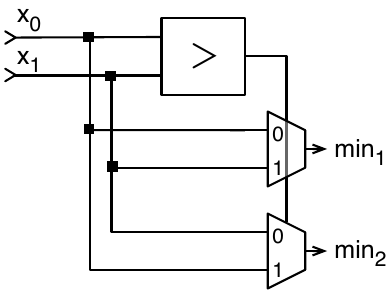}}
	\qquad
	\subfloat[][Merge block.]{\label{fig:merge}\includegraphics[scale=0.45]{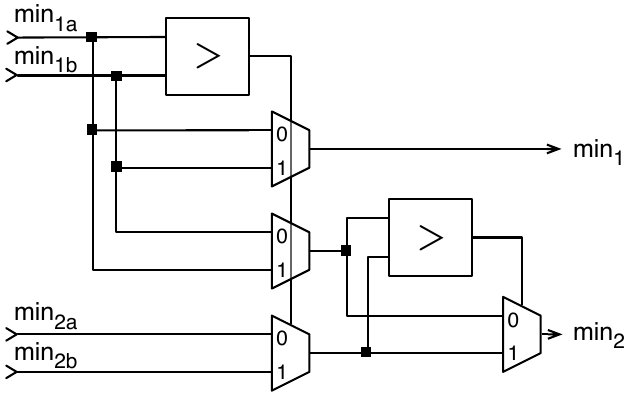}}
	\caption{Logic blocks used in the $\mathrm{MIN}_{1,2}$ unit.}
	\label{fig:minarch}
	\end{center}
	\end{figure}

\section{Deviation Model}\label{sec:deviation}

	
	\subsection{Quasi-Synchronous Systems}\label{sec:deviation:QS}
	We consider a synchronous system that permits timing violations without hardware compensation, resulting in what we call a \emph{quasi-synchronous} system. Optimizing the energy consumption of these systems requires an accurate model of the impact of timing violations, and of the energy consumption. We propose to achieve this by characterizing a test circuit that is representative of the complete circuit implementation.

	The term \emph{deviation} refers to the effect of circuit faults on the result of a computation, and the deviation model is the bridge between the circuit characterization and the analysis of the algorithm. We reserve the term \emph{error} for describing the algorithm, in the present case to refer to the incorrect detection of a transmitted symbol.
	A timing violation occurs in the circuit when the propagation delay between the input and output extends beyond a clock period. Modeling the deviations introduced by timing violations is challenging because they not only depend on the current input to the circuit, but also on the state of the circuit before the new input was applied. In general, timing violations also depend on other dynamic factors and on process variations. 
	
	In this paper, we focus on the case where the output of the circuit is entirely determined by the current and previous inputs of the circuit, and by the nominal operating condition of the circuit. 
	We denote by $\Gamma$ the set of possible operating conditions, represented by vectors of parameters, and by $\gamma \in \Gamma$ a particular operating condition. For example, an operating condition might specify the supply voltage and clock period used in the circuit.
	We assume that all the parameters specified by $\gamma$ are deterministic.
	
	\subsection{Computation Model}\label{sec:deviation:proc}

	As described in Section~\ref{sec:architecture}, we consider a decoder composed of a processing unit (shown in Fig.~\ref{fig:layered_arch}) that computes all messages to and from one check node in each clock cycle.
	Since the circuit is synchronous, we can represent the computations in terms of a discrete-time system.
	Let $X_k$ be the input at clock cycle $k$. When timing violations are allowed to occur, the corresponding\footnote{The circuit could require one or several clock cycles to generate the first output, but this is irrelevant to the characterization of the computation.} circuit output $Z_k$ can be expressed as $Z_k = g(X_k, S_k)$, where $S_k$ represents the state of the circuit at the beginning of cycle $k$, and $g$ is some deterministic function.
	By definition, the state $S_k$ depends on the previous input $X_{k-1}$, but not on $X_k$.
	
	\begin{figure}[tbp]
	\begin{center}
	\subfloat[][Functional representation.]{\label{fig:LDPC_tree_dev_model}\includegraphics[width=2.7in]{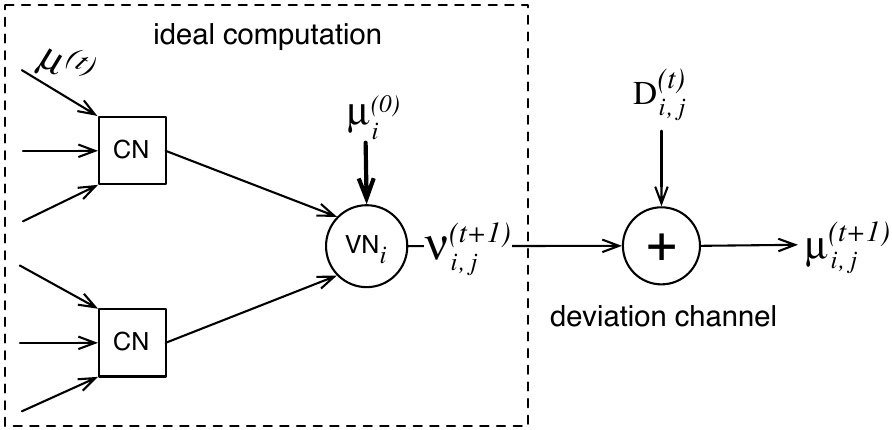}}
	\qquad
	\subfloat[][Bayesian representation.]{\label{fig:LDPC_tree_bayesian}\includegraphics[height=2in]{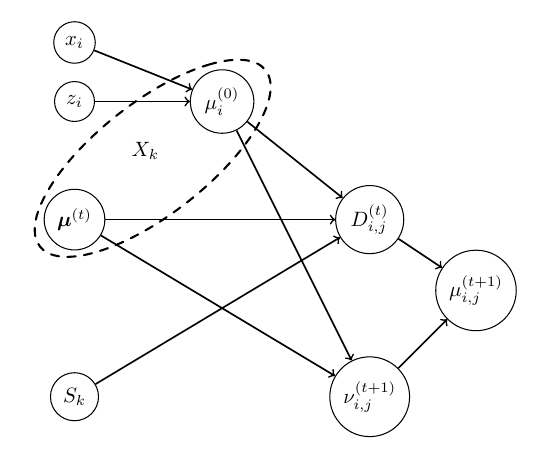}}
	\caption{Computation tree combined with the deviation model.}
	\label{fig:comptree_functionalbayesian}
	\end{center}
	\end{figure}
	
	The computations required to perform one decoding iteration can be represented using a computation tree, which models the generation of a VN-to-CN message in terms of messages $\mu^{(t)}$ sent in the previous iteration. There are $(d_v-1)$ check nodes in the tree. Each of these check nodes receives $(d_c-1)$ messages from neighboring variable nodes, and generates a message sent to the one VN whose message was excluded from the computation. This VN then generates an extrinsic message based on the channel prior $\mu_i^{(0)}$ and on the messages received from neighboring check nodes.
	An example of a computation tree is shown within the dashed box in Fig.~\ref{fig:LDPC_tree_dev_model}.
	In this paper, we assume that messages are updated using a flooding schedule, or in other words, that all messages $\mu^{(t)}$ at the left of the tree are identically distributed.

	Evaluating the computation tree requires $(d_v-1)$ uses of a processor, but the number of processors implemented in the decoder, and the way they are mapped to nodes of the Tanner graph can affect the modeling of deviations.
	Since a processor performs a parallel check node computation, $X_k$ and $Z_k$ are associated with $d_c$ distinct VNs. Let $\mathcal{V}_k$ denote the set of VNs associated with the processor during cycle $k$.
	Let us first assume that a processor is always mapped to distinct VNs in consecutive clock cycles, i.e.~for any processor, 
	\begin{equation}\label{eq:distinctVNs}
	\mathcal{V}_k \cap \mathcal{V}_{k-1} = \emptyset.
	\end{equation}
	Then, $X_k$ and $X_{k-1}$ are independent, and if they belong to the same decoding iteration, they are also identically distributed. As a result, $X_k$ and $S_k$ are also independent.
	At the output of the processor, $Z_k$ and $Z_{k-1}$ are not independent since they both depend on $X_{k-1}$. However, if \eqref{eq:distinctVNs} holds, messages received at a particular VN are guaranteed to have been generated in non-consecutive clock cycles, and it is therefore reasonable to consider only the marginal distribution of $Z_k$. 
	
	We now briefly describe some decoder architectures in which \eqref{eq:distinctVNs} holds.
	One possible architecture consists in implementing a single processor. Neglecting the circuit's latency, the processor would therefore be reused for $m/L$ cycles to compute each layer, and for $m$ cycles to perform one iteration.
	If we assume that the parity-check matrix contains at most one non-zero element per column and per layer, \eqref{eq:distinctVNs} clearly holds for cycles that belong to the same layer. Furthermore, since the order in which CNs are processed can be chosen arbitrarily, it can easily be chosen to ensure that \eqref{eq:distinctVNs} also holds when starting a new layer.
	Another architecture of particular interest is one that instantiates $m/L$ processors to achieve maximum parallelism. In this case, each processor is used once in each layer. This type of architecture is often used with quasi-cyclic parity-check matrices composed of cyclically shifted identity sub-matrices. In this case, \eqref{eq:distinctVNs} holds as long as the shift indices used in two consecutive layers are different. 
	
	For convenience, we choose to represent the effect of deviations at the output of the computation tree.
	The properties that have been established for the processors continue to apply when these processors are used to evaluate a computation tree, since this corresponds to the normal operation of the decoder that was just discussed. 
	Therefore, reusing the previous notation, we can write 
	$\mu_{i,j}^{(t+1)} = g(X_k, S_k)$, where $\mu_{i,j}^{(t+1)}$ is a VN-to-CN message that will be sent in the next iteration, and $S_k$ now represents the combined state of the processors that are used to evaluate the tree. 
	Defining $\bm{\mu}^{(t)}$ as a vector containing all the VN-to-CN messages that form the input of the computation tree, $X_k$ becomes the concatenation of $\bm{\mu}^{(t)}$ with $\mu_i^{(0)}$.

	\subsection{Deviation Model}\label{sec:deviation:model}
	
	We have seen above that a decoder message $\mu_{i,j}^{(t+1)}$ can be expressed as a function of the messages $\bm{\mu}^{(t)}$ received by the neighboring check nodes in the previous iteration and of the state of the processing circuit. To separate the deviations from the ideal operation of the decoder, it is helpful to decompose a decoding iteration into the ideal computation, followed by a transmission through a deviation channel.
	This model is shown in Fig.~\ref{fig:LDPC_tree_dev_model}, where $\nu_{i,j}^{(t+1)}$ is the message that would be sent from variable node $i$ to check node $j$ during iteration $t+1$ if no deviations had occurred during iteration $t$. 
	For the first messages sent in the decoder at $t=0$, the computation circuits are not used and therefore no deviation can occur, and we simply have $\mu_{i,j}^{(0)} = \nu_{i,j}^{(0)}$.
	Since we neglect correlations in successive circuit outputs, the deviation channel is memoryless.
	
	Unlike typical channel models where the noise is independent from other variables in the system,
	the deviation $D_{i,j}^{(t)}$ is a function of the current circuit input $X_k$
	and of the current state $S_k$. 
	Fig.~\ref{fig:LDPC_tree_bayesian} illustrates the dependencies between the random variables involved in the computation tree. Assuming a fixed distribution for $\bm{\mu}^{(t)}$ and $S_k$, $\phi(\mu_{i,j}^{(t+1)})$ can be determined by marginalizing the joint distribution of $x_i$, $z_i$, $\bm{\mu}^{(t)}$, $S_k$, and $\mu_{i,j}^{(t+1)}$, which in practice can be done using a Monte-Carlo simulation of a test circuit that implements the computation tree (such a simulation is discussed in more details in Appendix~\ref{sec:appendix:testcircuit}). However, directly measuring $\phi(\mu_{i,j}^{(t+1)})$ makes it difficult to extend the model to handle different input distributions. Instead, we propose a deviation model that retains conditional dependencies on the ideal message $\nu_{i,j}^{(t+1)}$ and on the transmitted bit $x_i$.
	We note that a similar model that did not include $x_i$ was also used in~\cite{dupraz:2015}.
	The deviation model is thus expressed as the conditional distribution $\phi(\mu_{i,j}^{(t+1)} \vert \nu_{i,j}^{(t+1)}, x_i)$.
	Using the ideal message as a model parameter rather than $X_k$ has the advantage of reducing the complexity of the model, but also of limiting the error when the model is used with other input distributions, as discussed in the following subsection.
	
	\subsection{Generalized Deviation Model}\label{sec:deviation:genmodel}
	The deviation model introduced in Section~\ref{sec:deviation:model} requires $\bm{\mu}^{(t)}$ and $S_k$ to have fixed distributions, but these distributions change at every decoding iteration.
	Furthermore, because the message distribution depends on the transmitted codeword, the deviation model also depends on the transmitted codeword.
	Under the assumption that $X_k$ and $X_{k-1}$ are \gls{iid}, $\pmf(S_k)$ is a function of $\pmf(\mu_{i,j}^{(t)})$, and it is thus sufficient to consider the evolution of $\pmf(\mu_{i,j}^{(t)})$.

	Let us first assume that the transmitted codeword is fixed. In this case, the message distribution $\pmf(\mu_{i,j}^{(t)})$ depends on the channel noise, on the iteration index $t$, and on the operating conditions of the circuit. 
	Since the messages are affected by deviations for $t>0$, only $\pmf(\mu_{i,j}^{(0)})$ is known a priori.
	An obvious way to measure deviations for all decoding iterations is to 
	determine $\phi(\mu_{i,j}^{(1)})$
	using the known $\pmf(\mu_{i,j}^{(0)})$, and to repeat the process for each subsequent decoding iteration. However, the resulting deviation model is of limited interest, since it depends on the specific message distributions in each iteration.
	
	To generate a model that is independent of the iterative progress of the decoder, we first approximate $\pmf(\mu_{i,j}^{(t)})$ as a \gls{1D} Normal distribution with error rate parameter $p_e^{(t)}$ chosen such that 
	\begin{equation}\label{eq:errorprob}
	p_e^{(t)} = \Pr(x_i \mu_{i,j}^{(t)} < 0) + \frac{1}{2}  \Pr(\mu_{i,j}^{(t)} = 0).
	\end{equation}
	This also provides us with an implicit parametrization of $\pmf(S_k)$ in terms of $p_e^{(t)}$.
	Note that while $\pmf(\mu_{i,j}^{(0)})$ does correspond exactly to a \gls{1D} Normal distribution, this is not necessarily the case after the first iteration. However, the approximation is only used to characterize deviations, and exact distributions can still be used to characterize messages. Therefore, $\pmf(\nu_{i,j}^{(t+1)})$ can be determined exactly, and the impact of the approximation remains small as long as $\Pr(\mu_{i,j}^{(t+1)} \neq \nu_{i,j}^{(t+1)})$ is small.
	In fact, combining a density evolution based on exact distributions with a deviation model generated using \gls{1D} Normal distributions leads to very accurate bit error rate predictions in practice \cite{leduc-primeau:2016a}.
	
	To evaluate $\phi(\mu_{i,j}^{(t)})$, we must also consider that deviations depend on the operating condition $\gamma$.
	Once $(p_e^{(t)}, \gamma)$ is specified, $\phi(\mu_{i,j}^{(t)})$ is uniquely determined by the synthesized circuit, and in order to retain the ability to represent arbitrary circuits, we make no assumption on the distribution and simply characterize it as an arbitrary conditional \gls{pmf}.
	We therefore obtain a model consisting of a family of non-parametric conditional \glspl{pmf} denoted as
	\begin{equation}\label{eq:devmodel2}
	\pmf^{(p_e^{(t)},\gamma)}\left(\mu_{i,j}^{(t+1)} \cond \nu_{i,j}^{(t+1)}, x_i \right),
	\end{equation}
	where $(p_e^{(t)}, \gamma)$ are the family parameters.
	However, we generally omit the $(p_e^{(t)},\gamma)$ superscript to simplify the notation.
	In practice, \eqref{eq:devmodel2} is constructed by performing several Monte-Carlo simulations of the circuit implementation of the computation tree in Fig.~\ref{fig:comptree_functionalbayesian} for various  $p_e^{(t)}$ values and for all operating conditions $\gamma \in \Gamma$. Interpolation is then used to obtain a continuous model in $p_e^{(t)}$.
	While measuring deviations, we also record the switching activity in the circuit, which is then used to construct an energy model that depends on $\gamma$ and $p_e^{(t)}$, denoted as $c_\gamma(p_e^{(t)})$ (where $c$ stands for ``cost'').
	
	To use the model, we first use \eqref{eq:errorprob} to determine the error rate parameter $p_e^{(t)}$ corresponding to the 
	arbitrary message distribution $\pmf(\mu_{i,j}^{(t)})$ at the beginning of the iteration, and we then
	 retrieve the appropriate conditional \gls{pmf} based on $p_e^{(t)}$ and on the operating condition $\gamma$.
	 This conditional \gls{pmf} then informs us of the statistics of deviations that occur at the end of the iteration, that is on messages sent in iteration $t+1$.
	
	As mentioned above, since $\pmf(\mu_{i,j}^{(t)})$ depends on the transmitted codeword, this is also the case of $\pmf(S_k)$ and of the deviation distributions.
	We show in Section~\ref{sec:analysis} that the codeword dependence is entirely contained within the deviation model and does not affect the analysis of the decoding performance, as long as the decoding algorithm and deviation model satisfy certain properties.
	Nonetheless, we would like to obtain a deviation model that does not depend on the transmitted codeword.
	This can be done when the objective is to predict the average performance of the decoder, rather than the performance for a particular codeword, since it is then sufficient to model the average behavior of the decoder.
	For the case where all codewords have an equal probability of being transmitted, we propose to perform the Monte-Carlo deviation measurements by randomly sampling transmitted codewords.
	This approach is supported by the experimental results presented in \cite{leduc-primeau:2016a}, which show that a deviation model constructed in this way can indeed accurately predict the average decoding performance.

\section{Performance Analysis}\label{sec:analysis}

	\subsection{Standard Analysis Methods for LDPC Decoders}\label{sec:std_ldpc_analysis}
	
	Density evolution (DE) is the most common tool used for predicting the error-correction performance of an LDPC decoder. The analysis relies on the assumption that messages passed in the Tanner graph are mutually independent, which holds as the code length goes to infinity \cite{richardson:2001a}.
	Given the channel output probability distribution and the probability distribution of variable node to check node messages at the start of an iteration, DE computes the updated distribution of variable node to check node messages at the end of the decoding iteration.
	This computation can be performed iteratively to determine the message distribution after any number of decoding iterations.
	The validity of the analysis rests on two properties of the LDPC decoder. The first property is the conditional independence of errors, which states that the error-correction performance of the decoder is independent from the particular codeword that was transmitted.
	The second property states that the error-correction performance of a particular LDPC code concentrates around the performance measured on a cycle-free graph, as the code length goes to infinity.
	
	Both properties were shown to hold in the context of reliable implementations \cite{richardson:2001a}.
	It was also shown that the conditional independence of errors always holds when the channel is output symmetric 
	and the decoder has a symmetry property. We can define a sufficient symmetry property of the decoder in terms of a message-update function $F_{i,j}$ that represents one complete iteration of the (ideal) decoding algorithm. Given a vector of all the messages $\bvec{\mu}^{(t)}$ sent from variable nodes to check nodes at the start of iteration $t$ and the channel information $\nu_i^{(0)}$ associated with variable node $i$, $F_{i,j}$ returns the next ideal message to be sent from a variable node $i$ to a check node $j$: 
$\nu_{i,j}^{(t+1)}=F_{i,j}\left(\bvec{\mu}^{(t)}, \nu_i^{(0)}\right)$. 
	\begin{definition}\label{def:decoder_symmetry}
	A message-update function $F_{i,j}$ is said to be \emph{symmetric} with respect to a code $C$ if 
	\[
	F_{i,j}\left(\bvec{\mu}^{(t)}, \nu_i^{(0)}\right) = x_i F_{i,j}\left(\bvec{x} \bvec{\mu}^{(t)}, x_i \nu_i^{(0)}\right)
	\]
	for any $\bvec{\mu}^{(t)}$, any $\nu_i^{(0)}$, and any codeword $\bvec{x} \in C$.
	\end{definition}%
	In other words, a decoder's message-update function is symmetric if multiplying all the VN-to-CN belief messages sent at iteration $t$ and the belief priors by a valid codeword $\bvec{x} \in C$ is equivalent to multiplying the next messages sent at iteration $t+1$ by that same codeword.
	Note that the symmetry condition in Definition~\ref{def:decoder_symmetry} is implied by the check node and variable node symmetry conditions in \cite[Def.~1]{richardson:2001a}. 

	\subsection{Applicability of Density Evolution}
	\label{sec:applicabilityDE}
	
	In order to use density evolution to predict the performance of long finite-length codes, the decoder must satisfy the two properties stated in Section~\ref{sec:std_ldpc_analysis}, namely the conditional independence of errors and the convergence to the cycle-free case.
	We first present some properties of the decoding algorithm and of the deviation model that are sufficient to ensure the conditional independence of errors.
	
	Using the multiplicative description of the BIAWGN channel, the vector received by the decoder is given by $\bm{y}=\bm{xz}$ when a codeword $\bm{x}$ is transmitted, or by $\bm{y}=\bm{z}$ when the all-one codeword is transmitted.
	In a reliable decoder, messages are completely determined by the received vector, but in a faulty decoder, there is additional randomness that results from the deviations. Therefore, we represent messages in terms of conditional probability distributions given $\bvec{xz}$.
	Since we are concerned with a fixed-point circuit implementation of the decoder, we can assume that messages are integers from the set $\{-Q, -Q+1, \dots, Q\}$, where $Q>0$ is the largest message magnitude that can be represented.
	\begin{definition}
	We say that a message distribution $\pmf_{\mu_{i,j}\vert\bvec{y}}(\mu \vert \bvec{xz})$ is symmetric if
	\[
	\pmf_{\mu_{i,j}\vert\bvec{y}}\left(\mu \cond \bvec{xz}\right) = 
	\pmf_{\mu_{i,j}\vert\bvec{y}}\left(x_i \mu \cond \bvec{z}\right) \, .
	\]
	\end{definition}	
	If a message has a symmetric distribution, its error probability as defined in \eqref{eq:errorprob} is the same whether $\bvec{xz}$ or $\bvec{z}$ is received.
	Similarly to the results presented in \cite{dupraz:2015}, we can show that the symmetry of message distributions is preserved when the message-update function is symmetric.
	\begin{lemma}\label{lem:ideal}
	If $F_{i,j}$ is a symmetric message-update function and if $\mu_i^{(0)}$ and $\mu_{i,j}^{(t)}$ have symmetric distributions for all $(i,j)$, the next ideal messages $\nu_{i,j}^{(t+1)}$ also have symmetric distributions.
	\end{lemma}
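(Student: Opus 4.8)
The plan is to compute the conditional law of the output message directly from its defining relation $\nu_{i,j}^{(t+1)} = F_{i,j}\!\left(\bvec{\mu}^{(t)}, \nu_i^{(0)}\right)$ and, following the same strategy as in \cite{dupraz:2015}, to reduce the case in which $\bvec{xz}$ is received to the case in which $\bvec{z}$ is received (the all-one codeword) by exploiting the symmetry of $F_{i,j}$ together with the symmetry of the input distributions. Concretely, the goal is to establish
\[
\pmf_{\nu_{i,j}^{(t+1)}\vert\bvec{y}}\!\left(\nu \cond \bvec{xz}\right) = \pmf_{\nu_{i,j}^{(t+1)}\vert\bvec{y}}\!\left(x_i\nu \cond \bvec{z}\right).
\]

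First I would condition on the received vector and write
\[
\pmf_{\nu_{i,j}^{(t+1)}\vert\bvec{y}}\!\left(\nu \cond \bvec{xz}\right) = \Pr\!\left(F_{i,j}(\bvec{\mu}^{(t)}, \nu_i^{(0)}) = \nu \cond \bvec{y}=\bvec{xz}\right),
\]
where, given $\bvec{y}$, the only randomness resides in the previous-iteration messages $\bvec{\mu}^{(t)}$, since the channel prior $\nu_i^{(0)}$ is a deterministic function of $y_i$. Invoking Definition~\ref{def:decoder_symmetry} and multiplying the defining equation of the event by $x_i$ (using $x_i^2=1$), the event $\{F_{i,j}(\bvec{\mu}^{(t)},\nu_i^{(0)})=\nu\}$ is seen to be identical to $\{F_{i,j}(\bvec{x}\bvec{\mu}^{(t)}, x_i\nu_i^{(0)}) = x_i\nu\}$. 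The symmetry of the channel prior then lets me identify $x_i\nu_i^{(0)}$, the prior produced by $\bvec{xz}$, with the prior produced by $\bvec{z}$.

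The key step is a change of variables $\bvec{\mu}' = \bvec{x}\bvec{\mu}^{(t)}$. Because the messages feeding a single computation tree originate from distinct variable nodes and are treated as mutually independent under density evolution, the per-component symmetry assumed in the hypothesis upgrades to the joint statement $\pmf_{\bvec{\mu}^{(t)}\vert\bvec{y}}\!\left(\bvec{\mu}\cond\bvec{xz}\right) = \pmf_{\bvec{\mu}^{(t)}\vert\bvec{y}}\!\left(\bvec{x}\bvec{\mu}\cond\bvec{z}\right)$. Hence the law of $\bvec{\mu}'$ under the $\bvec{xz}$-conditioning coincides with the law of $\bvec{\mu}^{(t)}$ under the $\bvec{z}$-conditioning, so the probability above becomes $\Pr\!\left(F_{i,j}(\bvec{\mu}', \nu_i^{(0)}) = x_i\nu \cond \bvec{y}=\bvec{z}\right) = \pmf_{\nu_{i,j}^{(t+1)}\vert\bvec{y}}\!\left(x_i\nu\cond\bvec{z}\right)$, which is exactly the claim.

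I expect the main obstacle to be the bookkeeping in this change of variables: one must verify that the sign flip applied to each leaf message $\mu_{k,\cdot}^{(t)}$ is governed by its own $x_k$ (not by $x_i$), confirm that this is precisely the component-wise product $\bvec{x}\bvec{\mu}^{(t)}$ appearing in the symmetry of $F_{i,j}$, and justify passing from the per-component symmetry of the hypothesis to the joint symmetry required for the substitution — which is where the density-evolution independence of the incoming messages is used. The remaining manipulations (exploiting $x_i^2=1$ and the determinism of the prior given $\bvec{y}$) are routine.
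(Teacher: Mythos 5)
Your proposal is correct and follows essentially the same route as the paper's proof: both express the law of $\nu_{i,j}^{(t+1)}$ via the event $\{F_{i,j}(\bvec{\mu}^{(t)},\nu_i^{(0)})=\nu\}$, upgrade the per-component message symmetry to joint symmetry using the density-evolution independence assumption, perform the change of variables $\bvec{\mu}'=\bvec{x}\bvec{\mu}^{(t)}$ together with the prior identification $x_i\nu_i^{(0)}$, and invoke Definition~\ref{def:decoder_symmetry} with $x_i^2=1$ to land on $\pmf_{\nu_{i,j}^{(t+1)}\vert\bvec{y}}\left(x_i\nu \cond \bvec{z}\right)$. The only cosmetic difference is that you phrase the argument as a change of measure on events (applying the symmetry of $F_{i,j}$ before the substitution, and treating $\nu_i^{(0)}$ as deterministic given $\bvec{y}$), whereas the paper writes the same manipulation as explicit sums over the preimage sets $R$ and $R'$.
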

	\begin{proof}
	We can express the distribution of the next ideal message from VN $i$ to CN $j$ 
	as
	\ifCLASSOPTIONdraftcls 
	\begin{equation}\label{eq:nextideal}
	\pmf_{\nu_{i,j}^{(t+1)}\vert\bvec{y}}\left(\nu \cond \bvec{xz}\right) = 
		\sum_{(\bvec{\mu}, \mu_i^{(0)}) \in R} \pmf_{\bvec{\mu}^{(t)} \vert \bvec{y}}\!\left(\bvec{\mu} \cond \bvec{xz}\right) \,
								\pmf_{\mu_i^{(0)}\vert\bvec{y}}\!\left(\mu_i^{(0)} \cond \bvec{xz}\right),
	\end{equation}
	\else 
	\begin{multline}\label{eq:nextideal}
	\pmf_{\nu_{i,j}^{(t+1)}\vert\bvec{y}}\left(\nu \cond \bvec{xz}\right) = \\
		\sum_{(\bvec{\mu}, \mu_i^{(0)}) \in R} \pmf_{\bvec{\mu}^{(t)} \vert \bvec{y}}\!\left(\bvec{\mu} \cond \bvec{xz}\right) \,
								\pmf_{\mu_i^{(0)}\vert\bvec{y}}\!\left(\mu_i^{(0)} \cond \bvec{xz}\right),	
	\end{multline}
	\fi
	where $R= \left\{ (\bvec{\mu},\mu_i^{(0)}) : F_{i,j}\left(\bvec{\mu}, \mu_i^{(0)}\right) = \nu \right\}$.
	
	Assuming that the elements of the VN-to-CN message vector $\bm{\mu}^{(t)}$ are independent and that each $\mu_{i,j}^{(t)}$ has a symmetric distribution,
	\ifCLASSOPTIONdraftcls 
	\[
	\pmf_{\bvec{\mu}^{(t)} \vert \bvec{y}}\!\left(\bvec{\mu} \cond \bvec{xz}\right)
	= \prod_k \pmf_{\mu_k^{(t)} \vert \bvec{y}}\!\left(\mu_k \cond \bvec{xz}\right)
	= \prod_k \pmf_{\mu_k^{(t)} \vert \bvec{y}}\!\left(x_k \mu_k \cond \bvec{z}\right)
	= \pmf_{\bvec{\mu}^{(t)} \vert \bvec{y}}\!\left(\bvec{x\mu} \cond \bvec{z}\right),
	\]
	\else 
	\begin{align*}
	\pmf_{\bvec{\mu}^{(t)} \vert \bvec{y}}\!\left(\bvec{\mu} \cond \bvec{xz}\right)
	&= \prod_k \pmf_{\mu_k^{(t)} \vert \bvec{y}}\!\left(\mu_k \cond \bvec{xz}\right) \\
	&= \prod_k \pmf_{\mu_k^{(t)} \vert \bvec{y}}\!\left(x_k \mu_k \cond \bvec{z}\right) \\
	&= \pmf_{\bvec{\mu}^{(t)} \vert \bvec{y}}\!\left(\bvec{x\mu} \cond \bvec{z}\right),	
	\end{align*}
	\fi
	and since the channel output $\mu_i^{(0)}$ also has a symmetric distribution,
	\[
	\pmf_{\mu_i^{(0)}\vert\bvec{y}}\!\left(\mu_i^{(0)} \cond \bvec{xz}\right)
	= \pmf_{\mu_i^{(0)}\vert\bvec{y}}\!\left(x_i \mu_i^{(0)} \cond \bvec{z}\right).
	\]
	Therefore, we can rewrite \eqref{eq:nextideal} as
	\ifCLASSOPTIONdraftcls 
	\begin{equation}\label{eq:nextideal2}
	\pmf_{\nu_{i,j}^{(t+1)}\vert\bvec{y}}\left(\nu \cond \bvec{xz}\right) = 
		\sum_{(\bvec{\mu}, \mu_i^{(0)}) \in R} \pmf_{\bvec{\mu}^{(t)} \vert \bvec{y}}\!\left(\bvec{x\mu} \cond \bvec{z}\right) \,
								\pmf_{\mu_i^{(0)}\vert\bvec{y}}\!\left(x_i \mu_i^{(0)} \cond \bvec{z}\right).
	\end{equation}
	\else 
	\begin{multline}\label{eq:nextideal2}
	\pmf_{\nu_{i,j}^{(t+1)}\vert\bvec{y}}\left(\nu \cond \bvec{xz}\right) = \\
		\sum_{(\bvec{\mu}, \mu_i^{(0)}) \in R} \pmf_{\bvec{\mu}^{(t)} \vert \bvec{y}}\!\left(\bvec{x\mu} \cond \bvec{z}\right) \,
								\pmf_{\mu_i^{(0)}\vert\bvec{y}}\!\left(x_i \mu_i^{(0)} \cond \bvec{z}\right).	
	\end{multline}
	\fi
	Finally, letting $\bvec{\mu}'=\bvec{x\mu^{(t)}}$ and $\nu_i' = x_i \mu_i^{(0)}$, \eqref{eq:nextideal2} becomes
	\[
	\pmf_{\nu_{i,j}^{(t+1)}\vert\bvec{y}}\left(\nu \cond \bvec{xz}\right) = 
		\sum_{(\bvec{\mu}', \nu_i') \in R'} \pmf_{\bvec{\mu}^{(t)} \vert \bvec{y}}\!\left(\bvec{\mu'} \cond \bvec{z}\right) \,
								\pmf_{\mu_i^{(0)}\vert\bvec{y}}\!\left(\nu_i' \cond \bvec{z}\right),
	\]
	where $R' = \left\{ (\bvec{\mu}', \nu_i') : F_{i,j}(\bvec{x\mu'}, x_i \nu_i') = \nu \right\}$.
	Since $F_{i,j}$ is symmetric, we can also express $R'$ as
	\[
	R' = \left\{ (\bvec{\mu}', \nu_i') : F_{i,j}(\bvec{\mu'}, \nu_i') = x_i \nu \right\},
	\]
	and therefore,
	\begin{align*}
	\pmf_{\nu_{i,j}^{(t+1)}\vert\bvec{y}}\left(x_i \nu \cond \bvec{z}\right)
		&= \sum_{(\bvec{\mu}', \nu_i') \in R'} \pmf_{\bvec{\mu}^{(t)} \vert \bvec{y}}\!\left(\bvec{\mu'} \cond \bvec{z}\right) \,
								\pmf_{\mu_i^{(0)}\vert\bvec{y}}\!\left(\nu_i' \cond \bvec{z}\right) \\
		&= \pmf_{\nu_{i,j}^{(t+1)}\vert\bvec{y}}\left(\nu \cond \bvec{xz}\right),
	\end{align*}
	indicating that the next ideal messages have symmetric distributions.
	\end{proof}
	
	To establish the conditional independence of errors under the proposed deviation model, we first define some properties of the deviation.
	\begin{definition}\label{def:devsym}
	We say that the deviation model is \emph{symmetric} if 
	\ifCLASSOPTIONdraftcls 
	\[ 
	\pmf_{\mu_{i,j}^{(t)} \vert \nu_{i,j}^{(t)}, \bvec{y}}\left(\mu \cond \nu, \bvec{xz}\right)
	= \pmf_{\mu_{i,j}^{(t)} \vert \nu_{i,j}^{(t)}, \bvec{y}}\left(\mu \cond \nu, \bvec{z}\right) 
	= \pmf_{\mu_{i,j}^{(t)} \vert \nu_{i,j}^{(t)}, \bvec{y}}\left(-\mu \cond -\nu, \bvec{z}\right).
	\]
	\else
	\begin{align*}
	\pmf_{\mu_{i,j}^{(t)} \vert \nu_{i,j}^{(t)}, \bvec{y}}\left(\mu \cond \nu, \bvec{xz}\right)
	&= \pmf_{\mu_{i,j}^{(t)} \vert \nu_{i,j}^{(t)}, \bvec{y}}\left(\mu \cond \nu, \bvec{z}\right) \\
	&= \pmf_{\mu_{i,j}^{(t)} \vert \nu_{i,j}^{(t)}, \bvec{y}}\left(-\mu \cond -\nu, \bvec{z}\right).	
	\end{align*}
	\fi
	\end{definition}
	\begin{definition}\label{def:weaksym}
	We say that the deviation model is \emph{weakly symmetric (WS)} if
	\[ 
	\pmf_{\mu_{i,j}^{(t)} \vert \nu_{i,j}^{(t)}, \bvec{y}}\left(\mu \cond \nu, \bvec{xz}\right)
	= \pmf_{\mu_{i,j}^{(t)} \vert \nu_{i,j}^{(t)}, \bvec{y}}\left(x_i \mu \cond x_i \nu, \bvec{z}\right).
	\]
	\end{definition}
	Note that if the model satisfies the symmetry condition, it also satisfies the weak symmetry condition, since $x_i \in \{-1, 1\}$. 
	We then have the following Lemma.
	\begin{lemma}\label{lem:independencetx}
	If a decoder having a symmetric message-update function and taking its inputs from an output-symmetric communication channel is affected by weakly symmetric deviations, its message error probability at any iteration $t\geq0$ is independent of the transmitted codeword.
	\end{lemma}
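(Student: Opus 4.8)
The plan is to prove that every message $\mu_{i,j}^{(t)}$ produced by the faulty decoder has a symmetric distribution, and then to invoke the remark following the definition of symmetric message distributions: a message with a symmetric distribution has the same error probability \eqref{eq:errorprob} whether $\bvec{xz}$ or $\bvec{z}$ is received. Since transmitting an arbitrary codeword $\bvec{x}$ corresponds to the decoder receiving $\bvec{xz}$, while transmitting the all-one codeword corresponds to receiving $\bvec{z}$, distributional symmetry makes the message error probability independent of the transmitted codeword, which is the assertion.

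I would establish the symmetry of all message distributions by induction on the iteration index $t$. For the base case $t=0$, no deviation occurs, so $\mu_{i,j}^{(0)}=\nu_{i,j}^{(0)}$, and these messages are symmetric because the communication channel is output-symmetric; this supplies the symmetry hypothesis needed by Lemma~\ref{lem:ideal}.

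For the inductive step, assume that all $\mu_{i,j}^{(t)}$ are symmetric. Lemma~\ref{lem:ideal}---whose hypotheses (a symmetric message-update function together with independent, marginally symmetric incoming messages under the cycle-free assumption) are met---then guarantees that the ideal next messages $\nu_{i,j}^{(t+1)}$ are symmetric as well. It remains to carry this symmetry through the memoryless deviation channel. Expanding the faulty message as a marginal,
\[
\pmf_{\mu_{i,j}^{(t+1)}\vert\bvec{y}}\!\left(\mu \cond \bvec{xz}\right)
= \sum_{\nu} \pmf_{\mu_{i,j}^{(t+1)}\vert\nu_{i,j}^{(t+1)},\bvec{y}}\!\left(\mu \cond \nu, \bvec{xz}\right)\,
\pmf_{\nu_{i,j}^{(t+1)}\vert\bvec{y}}\!\left(\nu \cond \bvec{xz}\right),
\]
I would apply the weak symmetry of the deviation (Definition~\ref{def:weaksym}) to the first factor and the symmetry of $\nu_{i,j}^{(t+1)}$ just obtained to the second factor, then re-index the sum with the bijection $\nu'=x_i\nu$ (valid since $x_i\in\{-1,1\}$). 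The right-hand side collapses to $\pmf_{\mu_{i,j}^{(t+1)}\vert\bvec{y}}(x_i\mu \cond \bvec{z})$, which is exactly the symmetry of $\mu_{i,j}^{(t+1)}$ and completes the induction.

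The step I expect to be the main obstacle is precisely this inductive combination: one must simultaneously invoke two distinct symmetry relations---the distributional symmetry of the ideal message and the conditional weak symmetry of the deviation---inside a single marginalization, and it is only after the substitution $\nu'=x_i\nu$ that the deviation factor reappears in the desired form $\pmf(x_i\mu \cond \nu',\bvec{z})$ so that the summation reassembles into the marginal under $\bvec{z}$. Care is also needed to ensure that the independence hypothesis required by Lemma~\ref{lem:ideal} is preserved from one iteration to the next, which follows because the deviation channel acts independently per edge and the incoming messages originate from disjoint subtrees. By comparison, reducing the error-probability statement to distributional symmetry is immediate.
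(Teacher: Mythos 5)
Your proposal is correct and follows essentially the same route as the paper: reduce the claim to distributional symmetry of the faulty messages, establish the base case from channel output symmetry and the absence of deviations at $t=0$, and close the induction by combining Lemma~\ref{lem:ideal} with the weak-symmetry marginalization and the substitution $\nu'=x_i\nu$. The only cosmetic difference is the phase of the induction (you hypothesize symmetry of the faulty messages $\mu_{i,j}^{(t)}$ and derive it for $\mu_{i,j}^{(t+1)}$, while the paper hypothesizes symmetry of the ideal messages $\nu_{i,j}^{(t)}$ and lets Lemma~\ref{lem:ideal} confirm the hypothesis at the end), which does not change the argument.
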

	\begin{proof}
	Similarly to the approach used in \cite[Lemma~4.90]{richardson:2008} and \cite{varshney:2011}, we want to show that the probability that messages are in error is the same whether $\bvec{xz}$ or $\bvec{z}$ is received. This is the case if the faulty messages $\mu_{i,j}^{(t)}$ have a symmetric distribution for all $t\geq0$ and all $(i,j)$.
	
	Since the communication channel is output symmetric and since no deviations can occur before the first iteration, messages $\mu_{i,j}^{(0)} = \nu_{i,j}^{(0)}$ have a symmetric distribution.
	We proceed by induction to establish the symmetry of the messages for $t>0$. We start by assuming that 
	\begin{equation}\label{eq:induct_assumpt}
	\pmf_{\nu_{i,j}^{(t)}\vert\bvec{y}}\left(\nu \cond \bvec{xz}\right) = 
	\pmf_{\nu_{i,j}^{(t)}\vert\bvec{y}}\left(x_i \nu \cond \bvec{z}\right)
	\end{equation}
	also holds for $t>0$.

	Using Definition~\ref{def:weaksym} and \eqref{eq:induct_assumpt}, we can write the faulty message distribution as
	\begin{align*}
	\pmf_{\mu_{i,j}^{(t)} \vert \bvec{y}}\left(\mu \cond \bvec{xz} \right)
		&= \sum_{\nu=-Q}^{Q} \pmf_{\mu_{i,j}^{(t)} \vert \bvec{y}}\left(\mu \cond \nu, \bvec{xz}\right) 
						   \pmf_{\nu_{i,j}^{(t)} \vert \bvec{y}}\left(\nu \cond \bvec{xz}\right) \\
		&= \sum_{\nu=-Q}^{Q} \pmf_{\mu_{i,j}^{(t)} \vert \bvec{y}}\left(x_i \mu \cond x_i \nu, \bvec{z}\right) 
						   \pmf_{\nu_{i,j}^{(t)} \vert \bvec{y}}\left(x_i \nu \cond \bvec{z}\right) \\
		&= \sum_{\nu'=-x_i Q}^{x_i Q} \pmf_{\mu_{i,j}^{(t)} \vert \bvec{y}}\left(x_i \mu \cond \nu', \bvec{z}\right) 
						   \pmf_{\nu_{i,j}^{(t)} \vert \bvec{y}}\left(\nu' \cond \bvec{z}\right) \\
		&= \sum_{\nu'=-Q}^{Q} \pmf_{\mu_{i,j}^{(t)} \vert \bvec{y}}\left(x_i \mu \cond \nu', \bvec{z}\right) 
						   \pmf_{\nu_{i,j}^{(t)} \vert \bvec{y}}\left(\nu' \cond \bvec{z}\right) \\
		&= \pmf_{\mu_{i,j}^{(t)} \vert \bvec{y}}\left(x_i \mu \cond \bvec{z}\right).
	\end{align*}
	where the third equality is obtained using the substitution $\nu'=x_i\nu$.
	We conclude that the faulty messages have a symmetric distribution.
	Finally, since the decoder's message-update function is symmetric, Lemma~\ref{lem:ideal} confirms the induction hypothesis in \eqref{eq:induct_assumpt}.	
	\end{proof}

	The last remaining step in establishing whether density evolution can be used with a decoder affected by WS deviations is to determine whether the error-correction performance of a code concentrates around the cycle-free case.
	The property has been shown to hold in \cite{varshney:2011} (Theorems 2, 3 and 4) for an LDPC decoder affected by ``wire noise'' and ``computation noise''. The wire noise model is similar to our deviation model, in the sense that the messages are passed through an additive noise channel, and that the noise applied to one message is independent of the noise applied to other messages. The proof presented in \cite{varshney:2011} only relies on the fact that the wire noise applied to a given message can only affect messages that are included in the directed neighborhood of the edge where it is applied, where the graph direction refers to the direction of message propagation. This clearly also holds in the case of our deviation model, and therefore the proof is the same.

	Since the message error probability is independent of the transmitted codeword, and furthermore concentrates around the cycle-free case, density evolution can be used to determine the error-correction performance of a decoder perturbed by our deviation model, as long as the deviations are weakly symmetric.
	It is important to note that as discussed in Section~\ref{sec:deviation:genmodel}, the deviation model itself still depends on the transmitted codeword. However, given a weakly symmetric deviation model, density evolution can be used to determine the decoder's performance. The hope is that in practice, only a single (or a few) deviation models are required to represent the deviations for all codewords, and indeed one model is sufficient to obtain accurate predictions in the experiment of~\cite{leduc-primeau:2016a}.

	\subsection{Deviation Examples}\label{sec:devexamples}
	As described in Section~\ref{sec:deviation:genmodel}, we collect deviation measurements from the test circuits by inputting test vectors representing random codewords, and distributed according to several $p_e^{(t-1)}$ values. We then generate estimates of the conditional \glspl{pmf} in \eqref{eq:devmodel2}.
	It is interesting to visualize the distributions using an aggregate measure such as the probability of observing a non-zero deviation 
	\begin{equation}\label{eq:pnz}
	p_\mathrm{nz}(\nu_{i,j}^{(t)},x_i) = \Pr^{(p_e^{(t-1)},\gamma)}\left(\mu_{i,j}^{(t)} \neq \nu_{i,j}^{(t)} \cond \nu_{i,j}^{(t)}, x_i \right). 
	\end{equation}
	These conditional probabilities are shown for a $(3,30)$ circuit in Fig.~\ref{fig:devPr_3-30}. When $x_i=1$, positive belief values indicate a correct decision, whereas when $x_i=-1$, negative belief values indicate a correct decision. 
	We can see that in this example, deviations are more likely when the belief is incorrect than when it is correct, and therefore a symmetric deviation model is not consistent with these measurements.
	On the other hand, there is a sign symmetry between the ``correct'' part of the curves, and between the ``incorrect'' parts, 
	that is $p_\mathrm{nz}(\nu_{i,j}^{(t)},1) \approx p_\mathrm{nz}(-\nu_{i,j}^{(t)},-1)$,
	and for this reason a weakly symmetric model is consistent with the measurements.
	Note that the slight jaggedness observed for incorrect belief values of large magnitude in the $p_e^{(t-1)}=0.008$ curves is due to the fact that these $\nu_{i,j}$ values occur only rarely. For the largest incorrect $\nu_{i,j}$ values, only about 100 deviation events are observed for each point, despite the large number of \gls{MC} trials. 

\begin{figure}[tbp]
\centering
\includegraphics[width=2.8in]{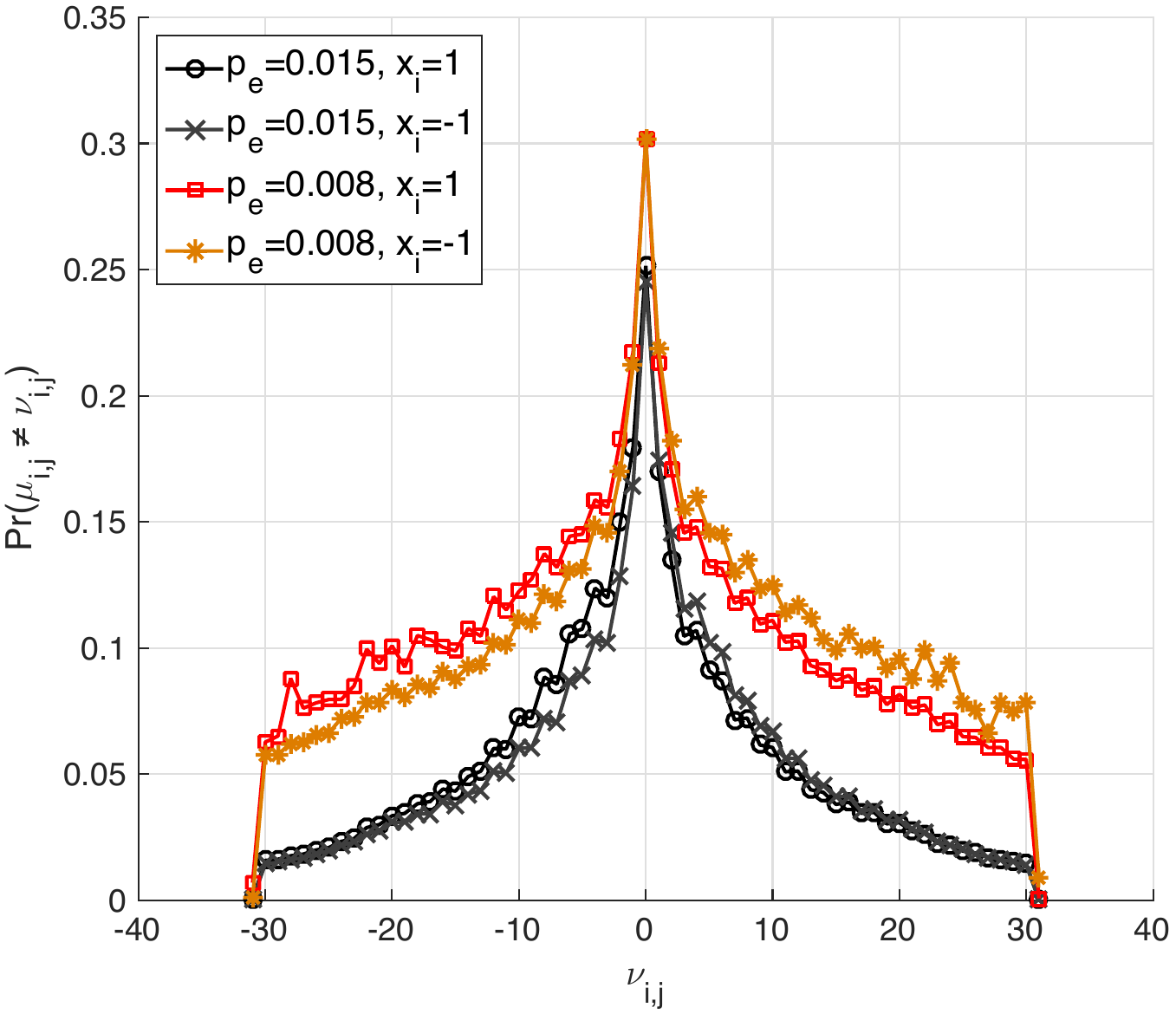} 
\caption{Non-zero deviation probability given $\nu_{i,j}^{(t)}$ and $x_i$ at two $p_e^{(t-1)}$ values, measured on a $(3,30)$ circuit operated at $\vdd=0.75\volt$ and $T_\mathrm{clk}=3.2\ns$. $3 \cdot 10^8$ decoding iteration trials were performed for each $p_e^{(t-1)}$ value. The total number of non-zero deviation events observed is 4,115,229 at $p_e^{(t-1)}=0.015$, and 10,071,810 at $p_e^{(t-1)}=0.008$.}
\label{fig:devPr_3-30}
\end{figure}

	Figure~\ref{fig:devPr_3-6} shows a similar plot for a $(3,6)$ circuit. In this case, $p_\mathrm{nz}(\nu_{i,j}^{(t)},x_i) \approx p_\mathrm{nz}(-\nu_{i,j}^{(t)},x_i)$, and a symmetric deviation model could be appropriate. Of course, since it is more general, a WS model is also appropriate.

\begin{figure}[tbp]
\centering
\includegraphics[width=2.8in]{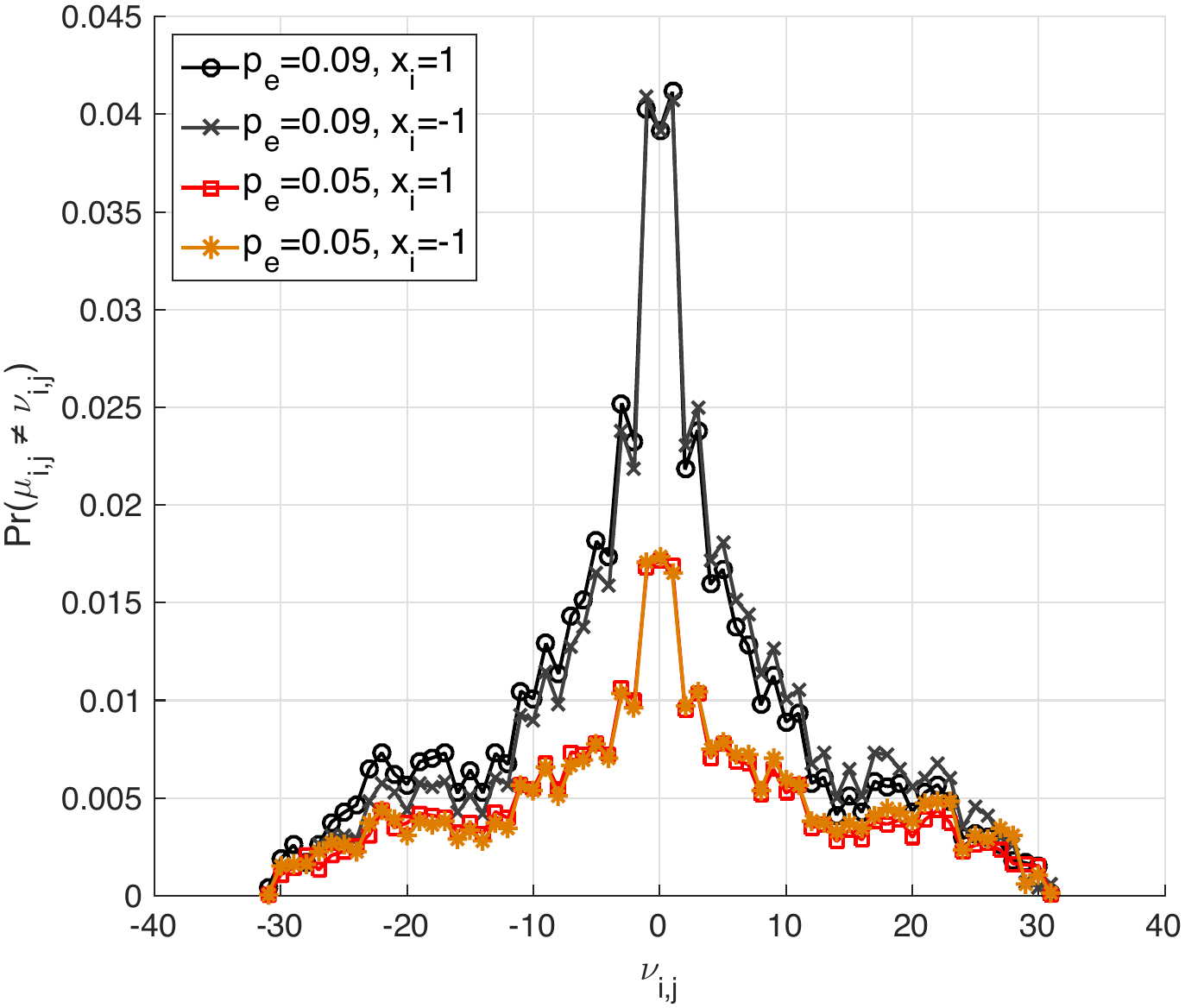} 
\caption{Non-zero deviation probability given $\nu_{i,j}^{(t)}$ and $x_i$ at two $p_e^{(t-1)}$ values, measured on a $(3,6)$ circuit operated at $\vdd=0.85\volt$ and $T_\mathrm{clk}=2.1\ns$. $3 \cdot 10^8$ decoding iteration trials were performed for each $p_e^{(t-1)}$ value. The total number of non-zero deviation events observed is 2,524,601 at $p_e^{(t-1)}=0.09$, and 1,020,867 at $p_e^{(t-1)}=0.05$.}
\label{fig:devPr_3-6}
\end{figure}

	Under the assumption that deviations are weakly symmetric, we have
	\[
	\pmf_{\mu_{i,j}^{(t)} | \nu_{i,j}^{(t)}, x_i}\left(\mu \cond \nu, 1\right) =
	\pmf_{\mu_{i,j}^{(t)} | \nu_{i,j}^{(t)}, x_i}\left(-\mu \cond -\nu, -1\right).
	\]
	Therefore, we can combine the $x_i=1$ and $x_i=-1$ data generated by the \gls{MC} simulation to improve the accuracy of the estimated \glspl{pmf}.
	To determine the validity of the WS assumption in a systematic way, we can generate an error metric by applying the WS assumption to one half of the simulation data to predict the other half. For all the circuits and operating conditions considered, the mean squared error of the predicted \glspl{pmf} remains below $10^{-3}$.
		
	Let $p_L$ and $p_H$ be respectively the smallest and largest $p_e^{(t-1)}$ values for which the deviations have been characterized. We can generate a conditional \gls{pmf} for any $p_e^{(t-1)} \in [p_L, p_H]$ by interpolating from the nearest \glspl{pmf} that have been measured. We choose $p_H \geq p_e^{(0)}$ to make sure that the first iteration's deviation is within the characterized range. 
	Because messages in the decoder are saturated once they reach the largest magnitude that can be represented, and since messages are represented in the CNP in sign \& magnitude format, the circuit's switching activity decreases when the message error probability becomes very small. Since timing faults cannot occur when the circuit does not switch, we can expect deviations to be equally or less likely at $p_e^{(t-1)}$ values below $p_L$.
	Therefore, to define the deviation model for $p_e^{(t-1)}<p_L$, we make the pessimistic assumption that the deviation \gls{pmf} remains the same as for $p_e^{(t-1)}=p_L$.

	\subsection{DE and Energy Curves}\label{sec:DE_energ_curves}
	We evaluate the progress of the decoder affected by timing violations using quantized density evolution \cite{chung:2001}. For the Offset Min-Sum algorithm, a DE iteration can be split into the following steps: 1-a)~evaluating the distribution of the CN minimum, 1-b)~evaluating the distribution of the CN output, after subtracting the offset, 2)~evaluating the distribution of the ideal VN-to-CN message, and 3)~evaluating the distribution of the faulty VN-to-CN messages. Step 1-a is given in \cite{balatsoukas-stimming:2014}, while the others are straightforward.
	In the context of DE, we write the message distribution as $\bvec{\pi}^{(t)}= \pmf(\mu_{i,j}^{(t)} | x_i=1)$, and the channel output distribution as $\bvec{\pi}^{(0)}= \pmf(\mu_i^{(0)} | x_i=1)$.
	We write a DE iteration as $\bvec{\pi}^{(t+1)} = f_\gamma(\bvec{\pi}^{(t)}, \bvec{\pi}^{(0)})$.

	As mentioned in Section~\ref{sec:deviation:genmodel}, the energy consumption is modeled in terms of the message error probability and of the operating condition, and denoted $c_\gamma(p_e^{(t)})$.
	As for the deviation model, we use interpolation to define $c_\gamma(p_e^{(t)})$ for $p_e^{(t)} \in [p_L, p_H]$, and assume that $c_\gamma(p_e^{(t)})=c_\gamma(p_L)$ for $p_e^{(t)} < p_L$.
	To display $f_\gamma(\bvec{\pi}^{(t)},\bvec{\pi}^{(0)})$ and $c_\gamma(p_e^{(t)})$ on the same plot, we project $\bvec{\pi}^{(t)}$ onto the message error probability space.
	
\begin{figure}[tbp]
\begin{center}
\includegraphics[width=2.915in]{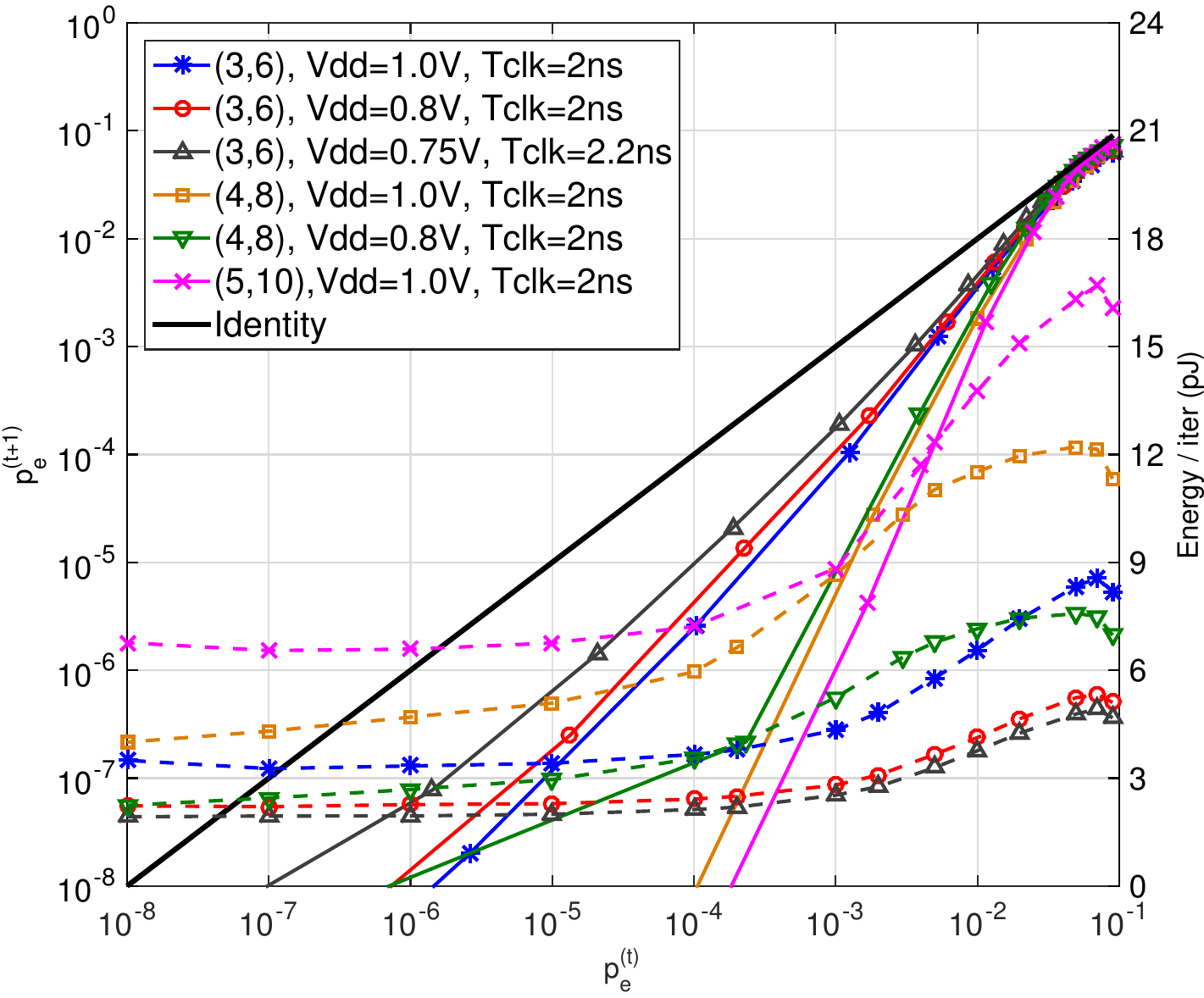} 
\caption{Examples of projected DE curves (solid lines) and energy curves (dashed lines) for rate $0.5$ ensembles with $d_v\in\{3,4,5\}$, and $p_e^{(0)}=0.09$.}
\label{fig:EXITexample}
\end{center}
\end{figure}

\begin{figure}[tbp]
\begin{center}
\includegraphics[width=2.915in]{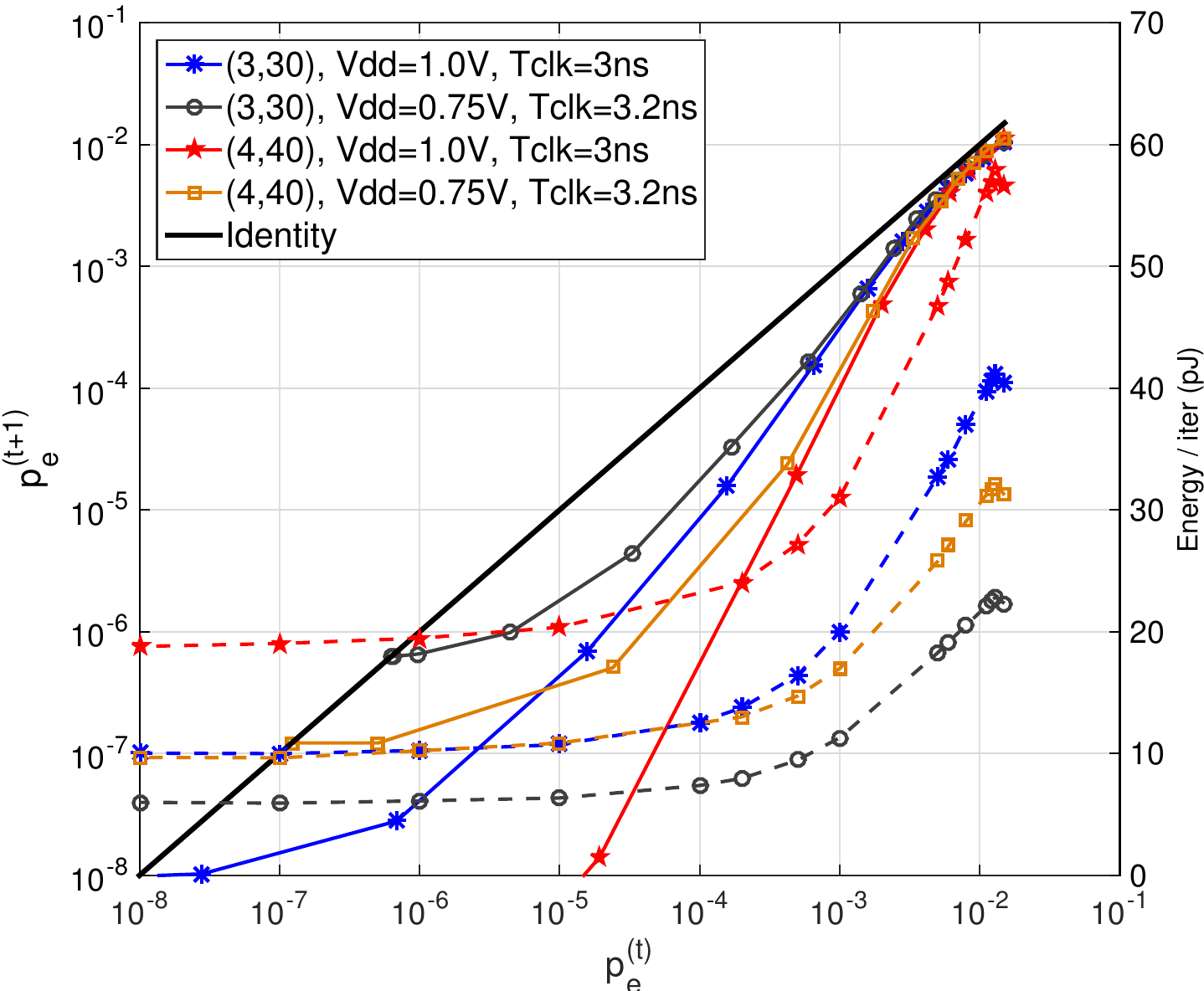} 
\caption{Examples of projected DE curves (solid lines) and energy curves (dashed lines) for the $(3,30)$ and $(4,40)$ ensembles (rate $0.9$), with $p_e^{(0)}=0.015$.}
\label{fig:EXITexample2}
\end{center}
\end{figure}

	Several regular code ensembles were evaluated, with rates $\frac{1}{2}$ and $\frac{9}{10}$.	
	Fig.~\ref{fig:EXITexample} shows examples of projected DE curves and energy curves for rate-$\frac{1}{2}$ code ensembles with $d_v \in \{3,4,5\}$ and various operating conditions. The energy is measured as described in Appendix~\ref{sec:appendix:workflow} and corresponds to one use of the test circuit (shown in Fig.~\ref{fig:testcircuit}).
	The nominal operating condition is $\vdd=1.0\volt$, $T_\mathrm{clk}=2.0\ns$ and therefore these curves correspond to a reliable implementation. With a reliable implementation, these ensembles have a channel threshold of $p_e^{(0)}\leq 0.12$ for the $(3,6)$ ensemble, $p_e^{(0)}\leq 0.11$ for $(4,8)$, and $p_e^{(0)}\leq 0.09$ for $(5,10)$. 
	We use $p_e^{(0)} = 0.09$ for all the curves shown in Fig.~\ref{fig:EXITexample} to allow comparing the ensembles. As can be expected, a larger variable node degree results in faster convergence towards zero error rate, and it is natural to ask whether this property might provide greater fault tolerance and ultimately better energy efficiency. This is discussed in Section~\ref{sec:results}.
	
	Fig.~\ref{fig:EXITexample2} is a similar plot for the $(3,30)$ and $(4,40)$ ensembles. The channel threshold of both ensembles is approximately $p_e^{(0)}\leq 0.019$. For these curves, the nominal operating condition is $\vdd=1.0\volt$ and $T_\mathrm{clk}=3\ns$.
	As we can see, the energy consumption per iteration of the $(4,40)$ decoder is roughly double that of the $(3,30)$ decoder.
	We note that in the case of the $(3,30)$ ensemble, the reliable decoder stops making progress at an error probability of approximately $10^{-8}$. This floor is the result of the message saturation limit chosen for the circuit.

\section{Energy Optimization}\label{sec:optimization}

	\subsection{Design Parameters}\label{sec:LDPC_param}

	As in a standard LDPC code-decoder design, the first parameter to be optimized is the choice of code ensemble. In this paper we restrict the discussion to regular codes, and therefore we need only to choose a degree pair $(d_v, d_c)$, where $R=1-d_v/d_c$ is the design rate of the code.
	For a fixed $R$, we can observe that both the energy consumption and the circuit area of the decoding circuit grow rapidly with $d_v$, and therefore it is only necessary to consider a few of the lowest $d_v$ values.

	Besides the choice of ensemble, we are interested in finding the optimal choice of operating parameters for the quasi-synchronous circuit. We consider here the supply voltage ($\vdd$) and the clock period ($T_\mathrm{clk}$).
	Generally speaking, the supply voltage affects the energy consumption, while the clock period affects the decoding time, or latency. The energy and latency are also affected by the choice of code ensemble, since the number of operations to be performed depends on the node degrees.
	The operating parameters of a decoder are denoted as a vector $\gamma=[\vdd,T_\mathrm{clk}]$. 
	
	The decoding of LDPC codes proceeds in an iterative fashion, and it is therefore possible to adjust the operating parameters on an iteration-by-iteration basis.
	In practice, this could be implemented in various ways, for example by using a pipelined sequence of decoder circuits, where each decoder is responsible for only a portion of the decoding iterations. 
	It is also possible to rapidly vary the clock frequency of a given circuit by using a digital clock divider circuit \cite{fischer:2006}.
	We denote by $\gammaseq$ the sequence of parameters used at each iteration throughout the decoding, and we use $\gammaseq=[\gamma_1^{N_1}, \gamma_2^{N_2}, \dots]$ to denote a specific sequence in which the parameter vector $\gamma_1$ is used for the first $N_1$ iterations, followed by $\gamma_2$ for the next $N_2$ iterations, and so on.

	\subsection{Objective}\label{sec:optimization:obj}
	The performance of the LDPC code and of its decoder can be described by specifying a vector $\bvec{P}=(p_e^{(0)},\pres,T_\mathrm{dec})$, where
	$p_e^{(0)}$ is the output error rate of the communication channel, 
	$\pres$ the residual error rate of VN-to-CN messages when the decoder terminates,
	and $T_\mathrm{dec}$ the expected decoding latency.
	
	The decoder's performance $\bvec{P}$ and energy consumption $E$ are controlled by $\gammaseq$.
	The energy minimization problem can be stated as follows. Given a performance constraint $\bvec{P}=(a,b,c)$, we wish to find the value of $\gammaseq$ that minimizes $E$, subject to $p_e^{(0)} = a$, $\pres \leq b$, $T_\mathrm{dec} \leq c$. 
	As in the standard DE method, we propose to use the code's computation tree as a proxy for the entire decoder, and furthermore to use the energy consumption of the test circuit described in Appendix~\ref{sec:appendix:testcircuit} as the optimization objective.
	To be able to replace the energy minimization of the complete decoder with the energy minimization of the test circuit, we make the following assumptions:
	\begin{enumerate}
	\item The ordering of the energy consumption is the same for the test circuit and for the complete decoder, that is, for any $\gamma_1$ and $\gamma_2$, $E_\textsc{test}(\gamma_1) \leq E_\textsc{test}(\gamma_2)$ implies $E_\textsc{dec}(\gamma_1) \leq E_\textsc{dec}(\gamma_2)$, where $E_\textsc{test}(\gamma)$ and $E_\textsc{dec}(\gamma)$ are respectively the energy consumption of the test circuit and of the complete decoder when using parameter~$\gamma$.
	\item The average message error rate in the test circuit and in the complete decoder is the same for all decoding iterations.
	\item The latency of the complete decoder is proportional to the latency of the test circuit, that is, if $T_\mathrm{dec}(\gamma)$ is the latency measured using the test circuit with parameter $\gamma$, the latency of the complete decoder is given by $\beta T_\mathrm{dec}(\gamma)$, where $\beta$ does not depend on $\gamma$.
	\end{enumerate}
	Assumption~1 is reasonable because the test circuit is very similar to a computation unit used in the complete decoder. The difference between the two is that the test circuit only instantiates one full VNP, the remaining $(d_c-1)$ VNPs being reduced to only their ``front'' part (as seen in Fig.~\ref{fig:testcircuit}), whereas the complete decoder has $d_c$ full VNPs for every CNP.
	Assumption~2 is the standard DE assumption, which is reasonable for sufficiently long codes.
	Finally, 
	it is possible for the clock period to be slower in the complete decoder, because the increased area could result in longer interconnections between circuit blocks. Even if this is the case, the interconnect length only depends on the area of the complete decoder, which is not affected by the parameters we are optimizing, and hence $\beta$ does not depend on $\gamma$.
	
	Clearly, if Assumption~1 holds and the performance of the test circuit is the same as the performance of the complete decoder, then the solution of the energy minimization is also the same. 
	The performance is composed of the three components $(p_e^{(0)}, \pres, T_\mathrm{dec})$. 
	The channel error rate $p_e^{(0)}$ does not depend on the decoder and is clearly the same in both cases.
	Because of Assumption~2, the complete decoder can achieve the same residual error rate as the test circuit when $p_e^{(0)}$ is the same.
	The latencies measured on the test circuit and on the complete decoder are not necessarily the same, but if Assumption~3 holds, and if we assume that the constant $\beta$ is known, then we can find the solution to the energy minimization of the complete decoder subject to constraints $(p_e^{(0)},\pres,T_\mathrm{dec})$ by instead minimizing the energy of the test circuit with constraints $(p_e^{(0)}, \pres, T_\mathrm{dec}/\beta)$.

	We also consider another interesting optimization problem.
	It is well known that for a fixed degree of parallelism, energy consumption is proportional to processing speed (represented here by $T_\mathrm{dec}$), which is observed both in the physical energy limit stemming from Heisenberg's uncertainty principle \cite{lloyd:2000}, as well as in practical CMOS circuits \cite{gonzalez:1996}.
	In situations where both throughput normalized to area and low energy consumption are desired, optimizing the product of energy and latency or \emph{energy-delay product} (EDP) for a fixed circuit area can be a better objective. In that case the performance constraint is stated in terms of $\bvec{P}=(p_e^{(0)}, \pres)$, and the optimization problem becomes the following: given a performance constraint $\bvec{P}=(a,b)$, minimize $E(\gammaseq) \cdot T_\mathrm{dec}(\gammaseq)$ subject to $p_e^{(0)} = a$, $\pres \leq b$, and a fixed circuit area.

\begin{table*}[t]
\centering
\caption{Energy and EDP optimization results.}
\begin{tabular}{ccllcrrrrr}
\toprule
                &     &     &     &  &     & \multicolumn{2}{c}{Standard} & \multicolumn{2}{c}{Quasi-synchronous} \\
\cmidrule{7-10}
Code & Nom. & \multicolumn{1}{c}{Norm.} & \multicolumn{1}{c}{$p_e^{(0)}$} & $\pres$ & \multicolumn{1}{c}{Latency} 
	& \multicolumn{1}{c}{Energy} & \multicolumn{1}{c}{EDP} & \multicolumn{1}{c}{Best energy} & \multicolumn{1}{c}{Best EDP}\\
family & $T_\mathrm{clk}$ & \multicolumn{1}{c}{area $\dagger$} & & & \multicolumn{1}{c}{[$\ns$]}
  	& \multicolumn{1}{c}{[$\pJ$]} & \multicolumn{1}{c}{[$\nJns$]} & \multicolumn{1}{c}{[$\pJ$]} & \multicolumn{1}{c}{[$\nJns$]} \\ 
\midrule
(3,6) & $2.0\ns$ & $1.066$ & $0.12^\ddag$ & $\leq 10^{-8}$ & $66$ & $250$ & $16.5$ & $192$ (-23\%) & $12.7$ (-23\%) \\ 
        &                &           & $0.09$ & $\leq 10^{-8}$ & $22$ & $68.2$ & $1.50$ & $45.0$ (-34\%)  & $0.98$ (-35\%)\\ 
(4,8) & $2.0\ns$ & $1.44$ & $0.09$ & $\leq 10^{-8}$ & $18$ & $98.5$ & $1.77$ & $74.9$ (-24\%) & $1.33$ (-25\%)  \\

\addlinespace
(3,30) & $3.0\ns$ & $1.099$ & $0.019^\ddag$ & $\leq 10^{-8}$ & $84.0$ & $\mathbf{883}$ & $74.2$ & $\mathbf{605}$ (-31\%) & $48.6$ (-35\%) \\
    & $2.5\ns$ & $1.135$ & $0.019^\ddag$ & $\leq 10^{-8}$ & $70.0$ & $916$ & $\mathbf{64.1}$ & $664$ (-28\%) & $\mathbf{46.5}$ (-27\%) \\
    & $3.0\ns$ & $1.099$  & $0.015$ & $\leq 10^{-8}$ & $39.0$ & $\mathbf{306}$ & $11.9$ & $\mathbf{196}$ (-36\%) & $7.35$ (-38\%) \\
    & $2.5\ns$ & $1.135$ & $0.015$ & $\leq 10^{-8}$ & $32.5$ & $324$ & $\mathbf{10.5}$ & $214$ (-34\%) & $\mathbf{6.92}$ (-34\%) \\
    
(4,40) & $3.0\ns$ & $1.522$ & $0.015$ & $\leq 10^{-8}$ & $27.0$ & $364$ & $9.83$ & $224$ (-38\%) & $5.93$ (-40\%) \\
\bottomrule

\multicolumn{10}{l}{$\dagger$ Cell area divided by the minimal area of the smallest decoder having the same code rate. $\ddag$ Approx. threshold.} \\
\end{tabular} 
\label{tbl:results}
\end{table*}
		
	\subsection{Dynamic Programming}\label{sec:dynprog}
	
	
	To solve the iteration-by-iteration energy and EDP minimization problems stated above, we adapt the ``Gear-Shift'' dynamic programming approach proposed in \cite{ardakani:2006}. The original method relies on the fact that the message distribution has a \gls{1D} characterization, which is chosen to be the error probability. By quantizing the error probability space, a trellis graph can be constructed in which each node is associated with a pair $(\tilde{p}_e^{(t)}, t)$.
	Quantized quantities are marked with tildes.
	A particular choice of $\gammaseq$ corresponds to a path $P$ through the graph, and the optimization is transformed into finding the least expensive path that starts from the initial state $(\tilde{p}_e^{(0)}, 0)$ and reaches any state $(\tilde{p}_e^{(t)},t)$ such that $\tilde{p}_e^{(t)} \leq \pres$ and the latency constraint is satisfied, if there is one.
	Note that to ensure that the solutions remain achievable in the original continuous space, the message error rates $p_e^{(t)}$ are quantized by rounding up. To maintain a good resolution at low error rates, we use a logarithmic quantization,
with 1000 points per decade.
	
	In the case of a faulty decoder, we want to evaluate the decoder's progress by tracking a complete message distribution using DE, rather than simply tracking the message error probability.
	In this case, the Gear-Shift method can be used as an approximate solver by projecting the message distribution 
	$\bvec{\pi}^{(t)}=\pmf(\mu_{i,j}^{(t)} | x_i=1)$ 
	onto the error probability space. We refer to this method as DE-Gear-Shift.
	Any path through the graph is evaluated by performing DE on the entire path using exact distributions, but different paths are compared in the projection space. As a result, the solutions that are found are not guaranteed to be optimal, but they are guaranteed to accurately represent the progress of the decoder.
	
	In the DE-Gear-Shift method, a path $P$ is a sequence of states $\{\bvec{\pi}^{(t)}\}$. As in the original Gear-Shift method, any sequence of decoder parameters $\gammaseq$ corresponds to a path. We denote the projection of a state onto the error probability space as $p_e^{(t)}= \Theta(\bvec{\pi}^{(t)})$.
	To each path $P$, we associate an energy cost $E_P$ and a latency cost $T_P$.
	A path ending at a state $\bvec{\pi}^{(t)}$ can be extended with one additional decoding iteration using parameter $\gamma$ by evaluating one DE iteration to obtain $\bvec{\pi}^{(t+1)} = f_\gamma(\bvec{\pi}^{(t)}, \bvec{\pi}^{(0)})$.
	Performing this additional iteration adds an energy cost $c_\gamma(\tilde{p}_e^{(t)}, p_e^{(0)})$ and a latency cost $T_\gamma$ to the path's cost.
	When optimizing EDP, we define the overall cost of a path $C_P$ as $C_P = E_P \cdot T_P$. When optimizing energy under a latency constraint, we define the path cost as a two-dimensional vector $C_P = (E_P, T_P)$.
	
	We use the following rules to discard paths that are suboptimal in the error probability space. Rule~1: Paths for which the message error rate is not monotonically decreasing are discarded. Rule~2: A path $P$ with cost $C_P$ is said to \emph{dominate} another path $P'$ with cost $C_{P'}$ if all the following conditions hold: 1) an ordering exists between $C_P$ and $C_P'$, 2) $C_P \leq C_P'$, 3) $\Theta(\bvec{\pi}_P) \leq \Theta(\bvec{\pi}_{P'})$, where $\bvec{\pi}_P$ denotes the last state reached by path $P$. The search for the least expensive path is performed breadth-first. After each traversal of the graph, any path that is dominated by another is discarded.
	
	When the path cost is one-dimensional, the optimization requires evaluating $O(|\Gamma| N_s)$ DE iterations, where $|\Gamma|$ is the number of operating points being considered and $N_s$ the number of quantization levels used for $\tilde{p}_e^{(t)}$. This can be seen from the fact that with a 1-D cost, Rule 2 implies that at most one path can reach a given state $\tilde{p}_e^{(t)}$. Therefore, $O(|\Gamma| N_s)$ DE iterations are required for each decoding iteration. In addition, upper bounds can be derived for the number of decoding iterations spanned by the trellis graph in terms of the smallest latency and energy cost of the parameters in $\Gamma$, and therefore it is a constant that does not depend on $|\Gamma|$ or $N_s$.
	On the other hand, when the cost is two-dimensional, the number of DE iterations could grow exponentially in terms of the number of decoding iterations. However, even in the case of a 2-D cost, an ordering exists between the costs of paths $P$ and $P'$ if $(E_P\geq E_{P'} \wedge T_P\geq T_{P'}) \vee (E_P\leq E_{P'} \wedge T_P\leq T_{P'})$, and in that case Rule~2 can be applied. In practice, for the cases presented in this paper, the discarding rules allowed to keep the number of paths down to a manageable level, even when using a 2-D cost.
	Note that an alternative to the use of a 2-D cost is to define a 1-D cost as $C_P = E_P + \kappa T_P$, and to perform a binary search for the value of $\kappa$ that yields an optimal solution with the desired latency.
	
	The algorithm can also be modified to search for parameter sequences that have other desirable properties beyond minimal energy or EDP. For example, if the decoder is implemented as a pipelined sequence of decoders, it can be desirable to favor solutions that do not require the decoder to switch its parameters too often. We can find good approximate solutions by adding a penalty to $E_P$ when the algorithm used in the current and next steps is different.
	
	\subsection{Results}\label{sec:results}

	We use DE-Gear-Shift to find good parameter sequences $\gammaseq$ for several regular ensembles with rates $\frac{1}{2}$ and $\frac{9}{10}$. 
	The parameter space $\Gamma$ consists of $(\vdd,T_\mathrm{clk})$ points with $\vdd$ from $0.70\volt$ to $1.0\volt$ in steps of $0.05\volt$ and several $T_\mathrm{clk}$ values depending on $\vdd$, in steps of $0.1\ns$.
	The standard and quasi-synchronous decoders use the same circuits.
	Parameter $\alpha$ in \eqref{eq:channelbelief} is set to $\alpha=4$ for the $(3,6)$, $(3,30)$, and $(4,40)$ decoders, and to $\alpha=2$ for the $(4,8)$ decoder. The offset parameter $C$ in Alg.~\ref{alg:loms} is set to $C=2$ for the $(4,40)$ decoder and to $C=1$ for all other decoders.
	As part of our best effort to design a good standard circuit, in the case of the $(3,30)$ decoder we present results for two circuits synthesized with different nominal $T_\mathrm{clk}$ values. The standard circuit has a lower energy consumption when synthesized with $T_\mathrm{clk}=3\ns$, while it has a lower EDP when synthesized with $T_\mathrm{clk}=2.5\ns$.

	We first run the DE-Gear-Shift solver without any path penalties to obtain the best possible parameter sequences, for both the energy and the EDP objectives. We also noticed that in some cases, adding a small algorithm change penalty allows to discover slightly better sequences.
	Note that when the objective is EDP, there is no constraint on latency. These results are summarized in Table~\ref{tbl:results}, where the energy is normalized per check node.
	Overall, we see that significant gains are possible while achieving the same channel noise, latency, and residual error requirements.
	The synthesis results show that increasing $d_v$ while keeping the rate constant leads to a significant increase in circuit area.
	Despite this, increasing the node degrees can result in a reduction of the EDP.
	For the rate $\frac{9}{10}$ ensembles, going from $d_v=3$ to $d_v=4$ decreases EDP by 6.4\% for a standard system, and by 14\% for a quasi-synchronous system. 
	However this is not the case for the rate $\frac{1}{2}$ ensembles, where $d_v=3$ has the smaller EDP.
	As expected, we can also see that much more energy is required when the channel quality is close to the ensemble's threshold.
	
	By applying a cost penalty to parameter switches, it is possible to find parameter sequences with few switches, without a large increase in cost.
	For example, for a $(3,6)$ decoder starting at $p_e^{(0)}=0.09$, a single operating condition can provide a 32\% EDP improvement, using $\gammaseq= [[0.8\volt,2.1\ns]^{11}]$. The probability of a non-zero deviation in that schedule ranges from 0.6\% to 7.2\%.
	In the case of a $(3,30)$ decoder synthesized at a nominal $T_\mathrm{clk}=2.5\ns$, for $p_e^{(0)}=0.015$ 
	the sequence $\gammaseq= [ [0.8\volt, 2.5\ns]^{12}, \allowbreak [1.0\volt, 2.5\ns] ]$ provides a 30\% EDP improvement, with non-zero deviation probabilities from 0 to 0.8\%.
	For a $(4,40)$ decoder, the single-parameter sequence $\gammaseq= [[ 0.8\volt, 2.8\ns]^{9}]$ provides a 39\% EDP improvement, with non-zero deviation probabilities from 1.6 to 4.6\%.

\section{Conclusion}\label{sec:conclusion}
We presented a method for the design of synchronous circuit implementations of signal processing algorithms that permits timing violations without the need for hardware compensation. 
We introduced a model for the deviations occurring in LDPC decoder circuits affected by timing faults that 
represent the circuit behavior accurately \cite{leduc-primeau:2016a}, 
while being independent of the iterative progress of the decoder.
In addition, we showed that in order to use density evolution to predict the performance of the faulty decoder, it is sufficient for the deviation model to have a weak symmetry property, which is more general than previously proposed sufficient properties.

We then presented an approximate optimization method called DE-Gear-Shift to find sequences of circuit operating parameters that minimize the energy or the energy-delay product. The method is similar to the previously proposed Gear-Shift method, but relies on density evolution rather than ExIT charts to evaluate the average iterative progress of the decoder.
Our results show that the best energy or EDP reduction is achieved by operating the circuit with a large number of timing violations (often with an average probability of non-zero deviation above 1\%). Furthermore, important savings can be achieved with few parameter switches, and without any compromise on circuit area or decoding performance.

In this work, we only considered delay variations associated with the signal transitions at the input of the circuit. While the energy savings that result from tolerating these variations are already significant, we ultimately see quasi-synchronous systems as an approach for tolerating the large process variations found in near-threshold CMOS circuits and other emerging computing technologies, potentially enabling energy savings of an order of magnitude.
Furthermore, we believe this approach can be extended to other self-correcting algorithms, such as deep neural networks.

\appendices
	\section{CAD Workflow}\label{sec:appendix:workflow} 
	The deviations and the energy consumption are measured directly on optimized circuit models generated by a commercial synthesis tool (\textvtt{Cadence Encounter} \cite{cadence_encounter}). We use TSMC's 65~nm process with the \emph{tcbn65gplus} cell library \cite{tcbn65gplus}.
	In order to provide a fair assessment of the improvements provided by the quasi-synchronous circuit, we first synthesize a \emph{benchmark} circuit that represents a best effort at optimizing the metric of interest, for example energy consumption. Since we do not have a specific throughput constraint for the design, we synthesize the benchmark circuit at the standard supply voltage of the library ($\vdd=1.0$V), while the clock period is chosen as small as possible without causing a degradation of the target metric.
	Second, we synthesize a \emph{nominal} circuit that will serve as the basis for the quasi-synchronous design. In this work, we use a standard synthesis algorithm for the nominal circuit, and in all the cases that we report on, the nominal and the benchmark circuits are actually the same.
	Using a standard synthesis method for the nominal circuit allows using off-the-shelf tools, but is not ideal since the objective of a standard synthesis algorithm (to make all paths only as fast as the clock period) differs from the objective pursued when some timing violations are permitted.
	For example, results in \cite{kahng:2010} show that the power consumption of a circuit can be reduced by up to 32\% when the gate-sizing optimization takes into account the acceptable rate of timing violations.
	Therefore it is possible that our results could be improved by using a different synthesis algorithm.
		
	Once the circuit is synthesized, we perform a static timing analysis of the gate-level model at various supply voltages. All timing analyses (including at the nominal supply) are performed using timing libraries generated by the \textvtt{Cadence Encounter Library Characterization} tool.
	We then use this timing information in a functional simulation of the gate-level circuit to observe the dynamic effect of path delay variations and measure the deviation statistics. 
	Any source of delay variation that can be simulated can be studied, but in this paper we focus on variations due to path activation, that is the variations in delay caused by the different propagation times required by different input transitions. 
	Note that other methods could be used to obtain the propagation delays, 
	such as the method described in \cite{pirbadian:2014} based on analytical models. In addition to speeding up the characterization, such methods allow considering the effect of process variations.
	
	Power estimation is performed by collecting switching activity data in the functional simulation and using the power estimation engine in \textvtt{Cadence Encounter}.
	However, because the circuit is operated in a quasi-synchronous manner, the clock period used to run the circuit is not necessarily the same as the nominal clock period. When that is the case, the power estimation generated by the synthesis tool cannot be used directly. First, the switching activity recorded during the functional simulation must be scaled so that it corresponds to the nominal clock period. 
	The tool's power estimation then reports the dynamic power $P_\mathrm{dyn}$ and the static power $P_\mathrm{stat}$.
	The dynamic energy consumed during one clock cycle does not depend on the clock period, whereas the static energy does. Therefore, the total energy consumed during one cycle by the quasi-synchronous circuit is given by $E_\mathrm{cycle} = P_\mathrm{dyn} T_\mathrm{clk,nom} + P_\mathrm{stat} T_\mathrm{clk}$, where $T_\mathrm{clk,nom}$ is the nominal clock period and $T_\mathrm{clk}$ is the actual clock period used to run the circuit.

	\section{Test Circuit Monte-Carlo Simulation}\label{sec:appendix:testcircuit}
	
	A suitable test circuit for a row-layered decoder architecture consists in implementing a single check node processor, as well as the necessary logic taken from the variable node processor block to send $d_v$ messages to the CNP, and receive one message from the CNP. This test circuit is shown in Fig.~\ref{fig:testcircuit}. It re-uses logic blocks that are found in the complete decoder, ensuring the accuracy of the deviation and energy measurements, and minimizing design time.
	
	The test circuit is used to evaluate the decoder's computation tree (shown in Fig.~\ref{fig:LDPC_tree_dev_model}).
	The VNP with index $1$, shown at the top, is always mapped to the VN that is at the head of the computation tree, while the VNPs at the bottom of the figure are mapped to different VNs as the CNP is successively mapped to each CN neighbor of the head VN.
	
	At any given clock cycle, a \emph{VNP front} block is mapped to a particular VN $i$. For illustrative purposes, we simply index the VN neighbors from $1$ to $d_c$, even if the VNs mapped to the bottom VNPs actually change at each layer.
	Each \emph{VNP front} block takes as input the previous belief total of that VN $\Lambda'_i$, and the previous CN-to-VN message corresponding to layer $\ell$, $\lambda_{i,J(i,\ell)}^{(t-1)}$.
	
	To perform the Monte-Carlo simulation, a \emph{VNP front} circuit block with index $i$ must send a message $\mu_i^{(t)}$, randomly generated according to a \gls{1D} normal distribution with error probability $p_e^{(t)}$. However, the only inputs that are controllable are $\Lambda'_i$ and $\lambda_{i,J(i,\ell)}^{(t-1)}$.
	To simplify the Monte-Carlo simulation, we disregard the true distribution of $\lambda_{i,J(i,\ell)}^{(t-1)}$ and generate it according to a \gls{1D} normal distribution.
	We also introduce another simplification: we assume that messages received at a VN only modify the total belief at the end of the iteration, as would be the case when using a flooding schedule. As a result, the messages $\mu_i^{(t)}$ are identically distributed with error rate parameter $p_e^{(t)}$ for all $\ell$. Note that these simplifications are not necessary, and they could be removed at the cost of a slightly more cumbersome Monte-Carlo simulation.
	
	To generate inputs with the appropriate distribution, we use the fact that 
	$\Lambda'_i = \mu^{(t)}_{i,J(i,\ell)} + \lambda^{(t-1)}_{i,J(i,\ell)}$. 
	On a cycle-free Tanner graph, $\mu^{(t)}_{i,J(i,\ell)}$ and $ \lambda^{(t-1)}_{i,J(i,\ell)}$ are independent, 
	but naturally $\Lambda'_i$ and $\lambda^{(t-1)}_{i,J(i,\ell)}$ are not. Therefore, we generate $\mu^{(t)}_{i,J(i,\ell)}$ and $\lambda^{(t-1)}_{i,J(i,\ell)}$ and sum them to obtain $\Lambda'_i$. 

	To complete the DE iteration, we want to measure an extrinsic message belonging to the next iteration. Because we assume a flooding schedule, this extrinsic message can be obtained by summing any set of $(d_v-1)$ messages in the current iteration.
	To achieve this, we start a DE iteration by setting $\Lambda'_1 \gets 0$ and $\lambda^{(t-1)}_{1,J(1,\ell)} \gets 0$ for all $\ell$. The desired extrinsic message then corresponds to the total belief output of the circuit $\Lambda^{(t)}_1$ after $d_v-1$ layers have been evaluated.

	Just like the processor used in the complete decoder, the test circuit has one input and one output register, as well as one internal pipeline register, for a latency of 3 clock cycles. In order to keep the pipeline fed, several distinct computation trees are evaluated in parallel during the Monte-Carlo simulation.
	
	\begin{figure}[tbp]
	\begin{center}
	\includegraphics[width=2in]{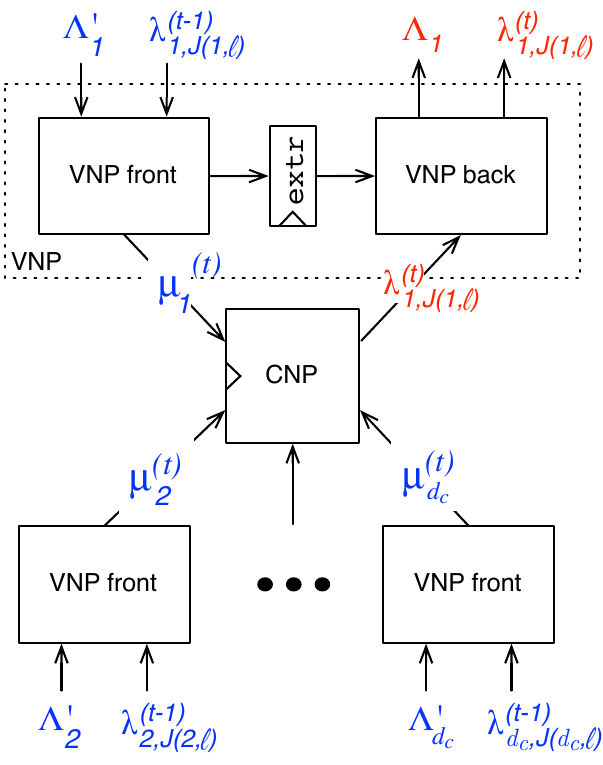}
	\caption{Block diagram of the test circuit.}
	\label{fig:testcircuit}
	\end{center}
	\end{figure}

\section*{Acknowledgements}
The authors wish to thank CMC Microsystems for providing access to the Cadence tools and TSMC 65nm CMOS technology, and Gilles Rust for advice on Cadence tools and cell library characterization.

\bibliography{IEEEabrv,computing_refs.bib,article_refs.bib}

\begin{thebibliography}{10}
\providecommand{\url}[1]{#1}
\csname url@samestyle\endcsname
\providecommand{\newblock}{\relax}
\providecommand{\bibinfo}[2]{#2}
\providecommand{\BIBentrySTDinterwordspacing}{\spaceskip=0pt\relax}
\providecommand{\BIBentryALTinterwordstretchfactor}{4}
\providecommand{\BIBentryALTinterwordspacing}{\spaceskip=\fontdimen2\font plus
\BIBentryALTinterwordstretchfactor\fontdimen3\font minus
  \fontdimen4\font\relax}
\providecommand{\BIBforeignlanguage}[2]{{%
\expandafter\ifx\csname l@#1\endcsname\relax
\typeout{** WARNING: IEEEtran.bst: No hyphenation pattern has been}%
\typeout{** loaded for the language `#1'. Using the pattern for}%
\typeout{** the default language instead.}%
\else
\language=\csname l@#1\endcsname
\fi
#2}}
\providecommand{\BIBdecl}{\relax}
\BIBdecl

\bibitem{leduc-primeau:2015b}
F.~Leduc-Primeau, F.~R. Kschischang, and W.~J. Gross, ``Energy optimization of
  {LDPC} decoder circuits with timing violations,'' in \emph{2015 IEEE Int.
  Conf. on Communications (ICC)}, June 2015, pp. 412--417.

\bibitem{ghosh:2010}
S.~Ghosh and K.~Roy, ``Parameter variation tolerance and error resiliency: New
  design paradigm for the nanoscale era,'' \emph{Proc. of the IEEE}, vol.~98,
  no.~10, pp. 1718--1751, Oct. 2010.

\bibitem{dreslinski:2010}
R.~Dreslinski, M.~Wieckowski, D.~Blaauw, D.~Sylvester, and T.~Mudge,
  ``Near-threshold computing: Reclaiming {M}oore's law through energy efficient
  integrated circuits,'' \emph{Proc. of the IEEE}, vol.~98, no.~2, pp.
  253--266, Feb. 2010.

\bibitem{austin:2005}
T.~Austin, V.~Bertacco, D.~Blaauw, and T.~Mudge, ``Opportunities and challenges
  for better than worst-case design,'' in \emph{Proc. of the 2005 Asia and
  South Pacific Design Automation Conf.}, 2005, pp. 2--7.

\bibitem{bowman:2009}
K.~Bowman, J.~Tschanz, N.~S. Kim, J.~Lee, C.~Wilkerson, S.-L. Lu, T.~Karnik,
  and V.~De, ``Energy-efficient and metastability-immune resilient circuits for
  dynamic variation tolerance,'' \emph{IEEE J. Solid-State Circuits}, vol.~44,
  no.~1, pp. 49--63, Jan. 2009.

\bibitem{das:2009}
S.~Das, C.~Tokunaga, S.~Pant, W.-H. Ma, S.~Kalaiselvan, K.~Lai, D.~Bull, and
  D.~Blaauw, ``Razor{II}: In situ error detection and correction for {PVT} and
  {SER} tolerance,'' \emph{IEEE J. of Solid-State Circuits}, vol.~44, no.~1,
  pp. 32--48, Jan. 2009.

\bibitem{hegde:1999}
R.~Hegde and N.~R. Shanbhag, ``Energy-efficient signal processing via
  algorithmic noise-tolerance,'' in \emph{Proc. 1999 Int. Symp. on Low Power
  Electronics and Design}, Aug. 1999, pp. 30--35.

\bibitem{shim:2004}
B.~Shim, S.~Sridhara, and N.~Shanbhag, ``Reliable low-power digital signal
  processing via reduced precision redundancy,'' \emph{IEEE Trans. on VLSI
  Systems}, vol.~12, no.~5, pp. 497--510, May 2004.

\bibitem{liu:2010}
Y.~Liu, T.~Zhang, and K.~Parhi, ``Computation error analysis in digital signal
  processing systems with overscaled supply voltage,'' \emph{IEEE Trans. on
  VLSI Systems}, vol.~18, no.~4, pp. 517--526, Apr. 2010.

\bibitem{varshney:2011}
L.~Varshney, ``Performance of {LDPC} codes under faulty iterative decoding,''
  \emph{IEEE Trans. Inf. Theory}, vol.~57, no.~7, pp. 4427--4444, Jul. 2011.

\bibitem{leduc-primeau:2012}
F.~Leduc-Primeau and W.~J. Gross, ``Faulty {Gallager-B} decoding with optimal
  message repetition,'' in \emph{Proc. 50th Allerton Conf. on Communication,
  Control, and Computing}, Oct. 2012.

\bibitem{tabatabaei-yazdi:2013}
S.~M.~S. Tabatabaei~Yazdi, H.~Cho, and L.~Dolecek, ``Gallager {B} decoder on
  noisy hardware,'' \emph{IEEE Trans. Commun.}, vol.~61, no.~5, pp. 1660--1673,
  May 2013.

\bibitem{huang:2014}
C.-H. Huang, Y.~Li, and L.~Dolecek, ``Gallager {B} {LDPC} decoder with
  transient and permanent errors,'' \emph{IEEE Trans. Commun.}, vol.~62, no.~1,
  pp. 15--28, Jan. 2014.

\bibitem{ngassa:2013}
C.~Ngassa, V.~Savin, and D.~Declercq, ``Min-sum-based decoders running on noisy
  hardware,'' in \emph{IEEE Global Communications Conf. (GLOBECOM)}, Dec. 2013,
  pp. 1879--1884.

\bibitem{dupraz:2015}
E.~Dupraz, D.~Declercq, B.~Vasic, and V.~Savin, ``Analysis and design of finite
  alphabet iterative decoders robust to faulty hardware,'' \emph{IEEE Trans. on
  Communications}, vol.~63, no.~8, pp. 2797--2809, Aug. 2015.

\bibitem{balatsoukas-stimming:2014}
A.~Balatsoukas-Stimming and A.~Burg, ``Density evolution for min-sum decoding
  of {LDPC} codes under unreliable message storage,'' \emph{IEEE Commun.
  Lett.}, vol.~18, no.~5, pp. 849--852, May 2014.

\bibitem{brkic:2015}
S.~Brkic, O.~Al~Rasheed, P.~Ivanis, and B.~Vasic, ``On fault tolerance of the
  {G}allager {B} decoder under data-dependent gate failures,'' \emph{IEEE
  Commun. Lett.}, vol.~19, no.~8, pp. 1299--1302, Aug. 2015.

\bibitem{blake:2015}
C.~G. Blake and F.~R. Kschischang, ``Energy consumption of {VLSI} decoders,''
  \emph{IEEE Trans. on Information Theory}, vol.~61, no.~6, pp. 3185--3198,
  Jun. 2015.

\bibitem{blake:2015b}
\BIBentryALTinterwordspacing
C.~Blake and F.~R. Kschischang, ``On the energy complexity of {LDPC} decoder
  circuits,'' \emph{CoRR}, vol. abs/1502.07999, 2015. [Online]. Available:
  \url{http://arxiv.org/abs/1502.07999}
\BIBentrySTDinterwordspacing

\bibitem{ganesan:2016}
K.~Ganesan, P.~Grover, J.~Rabaey, and A.~Goldsmith, ``On the total power
  capacity of regular-{LDPC} codes with iterative message-passing decoders,''
  \emph{IEEE Journal on Selected Areas in Communications}, vol.~34, no.~2, pp.
  375--396, Feb. 2016.

\bibitem{ardakani:2006}
M.~Ardakani and F.~R. Kschischang, ``Gear-shift decoding,'' \emph{IEEE Trans.
  Commun.}, vol.~54, no.~7, pp. 1235--1242, Jul. 2006.

\bibitem{schlafer:2016}
P.~Schl{\"a}fer, C.~H. Huang, C.~Schoeny, C.~Weis, Y.~Li, N.~Wehn, and
  L.~Dolecek, ``Error resilience and energy efficiency: An {LDPC} decoder
  design study,'' in \emph{2016 Design, Automation \& Test in Europe Conf.
  (DATE)}, March 2016, pp. 588--593.

\bibitem{cushon:2010}
K.~Cushon, C.~Leroux, S.~Hemati, S.~Mannor, and W.~Gross, ``A min-sum iterative
  decoder based on pulsewidth message encoding,'' \emph{IEEE Trans. Circuits
  Syst. II, Exp. Briefs}, vol.~57, no.~11, pp. 893--897, Nov. 2010.

\bibitem{roth:2010}
C.~Roth, P.~Meinerzhagen, C.~Studer, and A.~Burg, ``A 15.8 {pJ}/bit/iter
  quasi-cyclic {LDPC} decoder for {IEEE} 802.11n in 90 nm {CMOS},'' in
  \emph{2010 IEEE Asian Solid State Circuits Conf.}, Nov. 2010, pp. 1--4.

\bibitem{sun:2011}
Y.~Sun, G.~Wang, and J.~Cavallaro, ``Multi-layer parallel decoding algorithm
  and {VLSI} architecture for quasi-cyclic {LDPC} codes,'' in \emph{IEEE Int.
  Symp. on Circuits and Systems}, May 2011, pp. 1776--1779.

\bibitem{cevrero:2010}
A.~Cevrero, Y.~Leblebici, P.~Ienne, and A.~Burg, ``A 5.35 mm$^2$ {10GBASE-T}
  ethernet {LDPC} decoder chip in 90nm {CMOS},'' in \emph{Proc. IEEE Asian
  Solid-State Circuits Conf.}, Nov. 2010.

\bibitem{wey:2008}
C.-L. Wey, M.-D. Shieh, and S.-Y. Lin, ``Algorithms of finding the first two
  minimum values and their hardware implementation,'' \emph{IEEE Trans.
  Circuits Syst. I, Reg. Papers}, vol.~55, no.~11, pp. 3430--3437, Dec. 2008.

\bibitem{sedighi:2014}
B.~Sedighi, N.~P. Anthapadmanabhan, and D.~Suvakovic, ``Timing errors in {LDPC}
  decoding computations with overscaled supply voltage,'' in \emph{Proc. 2014
  Int, Symp. on Low Power Electronics and Design}, Aug. 2014, pp. 201--206.

\bibitem{leduc-primeau:2016a}
F.~Leduc-Primeau and W.~J. Gross, ``Finite-length quasi-synchronous {LDPC}
  decoders,'' in \emph{9th Int. Symp. on Turbo Codes and Iterative Information
  Processing}, Sep. 2016, pp. 325--329.

\bibitem{richardson:2001a}
T.~J. Richardson and R.~L. Urbanke, ``The capacity of low-density parity-check
  codes under message-passing decoding,'' \emph{IEEE Trans. Inf. Theory},
  vol.~47, no.~2, pp. 599--618, Feb. 2001.

\bibitem{richardson:2008}
T.~Richardson and R.~Urbanke, \emph{Modern Coding Theory}.\hskip 1em plus 0.5em
  minus 0.4em\relax Cambridge University Press, 2008.

\bibitem{chung:2001}
S.-Y. Chung, J.~David~Forney, T.~J. Richardson, and R.~Urbanke, ``On the design
  of low-density parity-check codes within 0.0045d{B} of the {S}hannon limit,''
  \emph{IEEE Commun. Lett.}, vol.~5, no.~2, pp. 58--60, Feb. 2001.

\bibitem{fischer:2006}
T.~Fischer, J.~Desai, B.~Doyle, S.~Naffziger, and B.~Patella, ``A 90-nm
  variable frequency clock system for a power-managed itanium architecture
  processor,'' \emph{IEEE J. Solid-State Circuits}, vol.~41, no.~1, pp.
  218--228, Jan. 2006.

\bibitem{lloyd:2000}
S.~Lloyd, ``Ultimate physical limits to computation,'' \emph{Nature}, vol. 406,
  no. 6799, pp. 1047--1054, Aug. 2000.

\bibitem{gonzalez:1996}
R.~Gonzalez and M.~Horowitz, ``Energy dissipation in general purpose
  microprocessors,'' \emph{IEEE J. Solid-State Circuits}, vol.~31, no.~9, pp.
  1277--1284, Sep. 1996.

\bibitem{cadence_encounter}
\BIBentryALTinterwordspacing
{Cadence Design Systems Inc.}, ``Encounter digital implementation system.''
  [Online]. Available: \url{http://www.cadence.com/products/di/edi_system}
\BIBentrySTDinterwordspacing

\bibitem{tcbn65gplus}
\emph{TCBN65GPLUS TSMC 65nm Core Library Databook}, Taiwan Semiconductor
  Manufacturing Company, Ltd.

\bibitem{kahng:2010}
A.~Kahng, S.~Kang, R.~Kumar, and J.~Sartori, ``Slack redistribution for
  graceful degradation under voltage overscaling,'' in \emph{Proc. 15th Asia
  and South Pacific Design Automation Conference (ASP-DAC)}, Jan. 2010, pp. 825
  --831.

\bibitem{pirbadian:2014}
A.~Pirbadian, M.~Khairy, A.~Eltawil, and F.~Kurdahi, ``State dependent
  statistical timing model for voltage scaled circuits,'' in \emph{IEEE Intl
  Symp. on Circuits and Systems (ISCAS)}, Jun. 2014, pp. 1432--1435.

\end{thebibliography}
\bibliographystyle{IEEEtran}

\end{document}